\newcommand{\N}{\mathbb{N}}
\newcommand{\R}{\mathbb{R}}
\newtheorem{thm}{Theorem}[section]
\newtheorem{lemma}{Lemma}[section]
\newtheorem{prop}[thm]{Proposition}
\newtheorem{corollary}{Corollary}
 \theoremstyle{remark} \newtheorem{remark}{Remark}
\newtheorem{definition}{Definition}
\newtheorem{example}{Example}
\begin{document}


\title{Poisson-Dirac Submanifolds as a Paradigm for Imposing Constraints in Non-dissipative Plasma Models} 



\author{F.W.Pinto}
\email[]{fwp238@my.utexas.edu}
\affiliation{Department of Physics, The University of Texas at Austin, Austin, TX 78712}
\author{J.W.Burby}
\email[]{joshua.burby@austin.utexas.edu}
\affiliation{Department of Physics and Institute for Fusion Studies, The University of Texas at Austin, Austin, TX 78712}


\date{\today}

\begin{abstract}
    We present a generalisation of Dirac constraint theory based on the theory of Poisson-Dirac Submanifolds. The theory is formulated in a coordinate free manner while simultaneously relaxing the invertibility condition as seen in standard Dirac constraint theory. We illustrate the method with two examples: elimination of the electron number density using Gauss’ Law and Ideal MHD as a slow manifold constraint in the ideal two-fluid model.  
\end{abstract}

\pacs{}

\maketitle 


\section{Introduction}

In order to derive reduced models without artificial dissipation, we often look to place constraints such that the reduced system inherits the ideal structure. The need for such techniques arise across physics (Gauge field theories, fluid dynamics, plasma physics) and dynamical systems theory (control theory \cite{Bloch2003}).  Robert Dewar explored the method of applying constraints to action principles  as a method for deriving dissipation free reduced models\cite{Dewar_2020, Dewar_Qu_2022}. In Ref.~\onlinecite{Dewar_Qu_2022}, Dewar applied constraints to bridge the gap between ideal Magnetohydrodynamics IMHD and Relaxed magnetohydrodynamics (RxMHD) by applying the ideal Ohm's law constraint to the RxMHD Lagrangian as a Lagrange multiplier to the action. One issue with this approach is that Lagrange multipliers can produce inconsistent Euler Lagrange equations. 
 
Dirac constraint theory (DCT) attacks the same problem from the Hamiltonian side, and looks for conditions on the constraints needed for the constrained system to be Hamiltonian. The method was originally introduced by Dirac\cite{Dirac_1950}, who developed the theory in an effort to quantize gauge field theories. More recently, the theory has been applied to the problem of imposing the incompressiblity constraint on the (compressible) Euler equations\cite{Nguyen_1999, Nguyen_2001}. The method was further developed by C. Chandre, P.J Morrison and others\cite{Chandre_2013, Morrison_2009, two_fluid_morri}.
 
The Dirac bracket used in the most recent work of Morrison\cite{Morrison_2020, 10.1063/1.3356103} provides an alternate method for imposing constraints that guarantee clean evolution equations. 

Additionally, constraint theory based on an alternative to Dirac brackets has been used to show how reduced theories like ideal magnetohydrodynamics (IMHD) inherit their Hamiltonian structure from the ideal two fluid Maxwell (ITFM)\cite{10.1063/1.4994068}.

The Dirac constraint method has two drawbacks. It first requires a coordinate system in which the constraint can be written as the vanishing of some set of coordinates $C=0$ and secondly requires that the Poisson bracket of the components $\{C_i,C_j\}$ is invertible \cite{Morrison_2020}. We describe a coordinate free geometrical method that relaxes the non degeneracy requirement by exploiting the theory of Poisson-Dirac Submanifolds. {  Marsden, Weinstein and others developed the local picture of Poisson geometry in the early to mid 1980's. In particular, Poisson-Dirac submanifolds appeared explicitly in Ref.~\onlinecite{Poisson_Red_Mars}. C. Chandre in Ref.~\onlinecite{CHANDRE20151} was able to relax the non-degeneracy requirement by using a psuedoinverse for $\{C_i,C_j\}$. Given the psuedoinverse is not unique infinite dimension, it is not clear that such a bracket is unique. Moreover, the proof of the Jacobi identity in Ref.~\onlinecite{CHANDRE20151} was coordinate dependent, and assumed that the constraints can be separated based on their class. The Poisson-Dirac construction resolves these issues and provides geometric avenue for constructing dissipation free brackets.}

In section 1, we outline the Poisson-Dirac Constraint Method as generalisation of DCT. We then make the the theory coordinate free by formulating the constraint sets as submanifolds of the phase space. In sections 2 and 3, we illustrate the method with two examples: elimination of the electron number density using Gauss' Law and Ideal MHD as a slow manifold constraint in the ideal two-fluid model. Both examples exhibit the constraint as the graph of a function although the theory is applicable more generally as long as the constrained system is a submanifold of the full system.

\section{Poisson-Dirac Constraint Method and Slow Manifolds}

We proceed by first recalling the assumptions of Dirac constraint theory. Let $M$ be the phase space with Poisson bracket $\{\cdot,\cdot \}$ and suppose that there are coordinates on $M$, say {  $(\sigma, c) = (\sigma_1,\dots \sigma_n, c_1,\cdots, c_m)$}, such that the constraint set of interest can be written in the form 
\begin{equation}\label{sigma} 
    \Sigma = \{(\sigma, c): c=0\}. 
\end{equation}
In order to derive a Poisson bracket on $\Sigma$, DCT assumes that the matrix of Poisson brackets
\begin{equation}
    \{c,c\}_{ij} = \{c_i,c_j\}
\end{equation}
is invertible. In such a case, there is a Poisson bracket \cite{Morrison_2020} on $\Sigma$ given by the formula
\begin{equation}
    \{f,g\}_{\text{D}} = \fdv{f}{\sigma}\cdot [\{\sigma,\sigma\}_\Sigma - \{\sigma,c\}_\Sigma\cdot \{c,c\}_\Sigma^{-1} \cdot \{c,\sigma\}_\Sigma]\cdot \fdv{g}{\sigma}.
\end{equation}
Here $f,g: \Sigma \to \R$, the $\Sigma$ subscript denotes restriction to $\Sigma$, and {  $\fdv{f}{\sigma}$ are vectors of derivatives of $f$ with respect to $\sigma$}. {  Moreover, the above bracket defines a valid Poisson bracket on the whole phase space, not just $\Sigma$}. This bracket, together with a Hamiltonian function on $\Sigma$, can be used to formulate Hamiltonian reduced models on $\Sigma$. The ``reduction" corresponds to passing from the ambient phase space $M$ to the constraint subspace $\Sigma$.

The invertibility requirement on $\{c,c\}$ is artificially restrictive. Suppose instead that the kernel of $\{c,c\}$ is a vector subspace of $\text{ker}\,\{\sigma,c\}$:
\begin{equation} \label{ker_condition}
    \ker \{c,c\} \subset \ker \{\sigma,c\}. 
\end{equation}
Under this new assumption, we can still define a bracket on $\Sigma$ using the formula
\begin{equation} \label{PD_bracket}
    \{f,g\}_{\text{PD}} = \fdv{f}{\sigma}\cdot \{\sigma,\sigma\}_\Sigma\cdot \fdv{g}{\sigma} - \frac{\Delta f}{\Delta c} \cdot \{c,c\}\cdot \frac{\Delta g}{\Delta c},
\end{equation}
 where the vector valued functions $\frac{\Delta f}{\Delta c}, \frac{\Delta g}{\Delta c}$ are (possibly non-unique) solutions of
\begin{align} \label{PD_eqns}
    \{c,\sigma\}\cdot \fdv{f}{\sigma} +  \{c,c\}\cdot \frac{\Delta f}{\Delta c} = 0\\
    \{c,\sigma\}\cdot \fdv{g}{\sigma} +  \{c,c\}\cdot \frac{\Delta g}{\Delta c} = 0.
\end{align}
The solvability conditions for these equations, namely
\begin{align*}
0&=v\cdot \{c,\sigma\}\cdot \fdv{f}{\sigma}\\
0&=v\cdot \{c,\sigma\}\cdot \fdv{g}{\sigma},
\end{align*}
for all $v\in\text{ker}\,\{c,c\}$, are automatically satisfied because $v\in \text{ker}\,\{c,c\}$ implies $v\in\text{ker}\,\{\sigma,c\}$ and $v\cdot \{c,\sigma\}\cdot w = - w\cdot \{\sigma,c\}\cdot v$. Solutions are determined modulo elements of $\text{ker}\{c,c\}$, but this non-uniqueness does not affect the value of the bracket $\{f,g\}_{\text{PD}}$ because $\frac{\Delta f}{\Delta c}, \frac{\Delta g}{\Delta c}$ are each contracted with $\{c,c\}$ in \eqref{PD_bracket}. {  We remark that \eqref{PD_eqns} are the same conditions for the existence for the existence of the psuedoinverse in Ref.~\onlinecite{CHANDRE20151}.} 

In fact, Eq.\,\eqref{PD_bracket} does indeed define a Poisson bracket on $\Sigma$, provided $\{f,g\}_{\text{PD}}$ is smooth whenever $f,g$ are smooth. To explain why, as well as to provide a coordinate-independent geometric formulation of \eqref{PD_bracket}, we will deduce \eqref{PD_bracket} from the theory of Poisson-Dirac submanifolds\cite{32155}. In so doing we will outline an extension of the Dirac constraint method that we term the Poisson-Dirac constraint method.

Before constructing the geometric picture, we give an example that demonstrates the Poisson-Dirac constraint method is strictly more general than the usual Dirac constraint method \cite{Morrison_2020}. 
\begin{example}
    Let the constraint coordinates $c$ be Casimirs of the Poisson bracket $\{\cdot,\cdot\}$ so that $\Sigma$ as defined in \eqref{sigma} is the level set of (some of the) Casimirs. Then, all the matrices $\{c,c\}, \{\sigma,c\}, \{c, \sigma\}$ vanish so that the kernel condition \eqref{ker_condition} is trivially satisfied. One notices that any $\frac{\Delta f}{\Delta c}, \frac{\Delta g}{\Delta c}$ are a valid solution to \eqref{PD_eqns}, Thus implying that the bracket on $\Sigma$ is
    \begin{equation}
        \{f,g\}_{Dirac} = \fdv{f}{\sigma}\cdot \{\sigma,\sigma\}_\Sigma\cdot \fdv{g}{\sigma}.
    \end{equation}
    Even though the inverse $\{c,c\}^{-1}$ does not exist. More generally, the constraint functions $c$ may functionally depend on Casimirs in a non-obvious way, leading to a non-zero $\{c,c\}$ that cannot be inverted.
\end{example}

We will now arrive at \eqref{PD_bracket} using a coordinate-independent argument. We first define the geometric analog of the Poisson bracket, the Poisson bivector. { Given a smooth manifold $M$, equipped with a Poisson bracket $\{\cdot,\cdot\}$, there is corresponding bivector $\pi$ defined as follows.}
\begin{definition}
    Given a Poisson bracket $\{\cdot,\cdot\}:C^\infty(M)\to C^\infty(M)$ on a smooth  manifold $M$, the \textbf{Poisson bivector} $\pi\in \mathcal{X}^2(M)$ is the unique bivector field such that 
    \begin{equation}
    \pi(df,dg) = \{f,g\}, 
    \end{equation}
    for all $f,g\in C^\infty(M)$.
\end{definition}
In a local coordinate patch on $M$, $\pi$ is written 
\begin{equation}
    \pi = \pi^{ij}\pdv{x^i}\wedge \pdv{x^j}.
\end{equation}
{  Here, we are assuming a sum over the indices $i,j\in \{1,\dots, n\}$. }The objects $\pdv{x^j}$ can be thought of as basis vectors dual to the basis co-vectors { $dx^j$}. We note one can formulate the condition for $\pi$ to satisfy a Jacobi identity by use of the \textbf{Schouten Bracket}\cite{32155, Marsden1999} of multi-vector fields. {  For a review on the calculus of multivector fields, see Ref.~\onlinecite{32155}.}

{ The bivector induces a map $\pi^\#:T^*M\rightarrow TM$ that sends each covector $\alpha$ to the vector $\pi^\#(\alpha)$ defined by requiring $\overline{\alpha}(\pi^\sharp(\alpha)) = \pi(\overline{\alpha},\alpha)$ for all covectors $\overline{\alpha}$ with the same basepoint as $\alpha$. In finite dimensions with coordinates $x^i$ the components of $\pi^\#$ are given by}\cite{32155, Marsden1999}:
\begin{equation}
    (\pi^\#)^{ij} = \{x^i,x^j\}. 
\end{equation}
One notices that if $f$ is any smooth function on $M$ then,
\begin{equation}
    \pi^\#(df) = X_f, 
\end{equation}
where $X_f$ denotes the Hamiltonian vector field with Hamiltonian $f$. For a comprehensive review on Cartan's calculus in mechanics and plasma physics, see Ref.~\onlinecite{Abraham1967Foundations} and Ref.~\onlinecite{MacKay_2020}. Naively, $df$ can be thought similarly to $(\nabla f)^T$ (in fact, they are equal up to an isomorphism that depends on one's choice of metric tensor). Geometers call the map $\pi^\#$ the \textbf{anchor map}\cite{32155, 10.1007/978-3-642-20300-8_20, fernandes2009momentummappoissongeometry}. Moreover, we can { recover the Poisson bivector from $\pi^\#$ by defining $\pi$ as the unique bivector that satisfies}
\begin{equation}
    \pi(df,dg) = \langle df,  \pi^\#(dg) \rangle.
\end{equation}
Here, $\langle\cdot,\cdot \rangle: T^*_pM\times T_pM\to \R$ denotes the non-degenerate pairing between the tangent space $T_pM$ and its dual $T^\ast_pM$\cite{RevModPhys.70.467}. 







Now we are ready to introduce the type of constraint that underlies Poisson-Dirac constraint theory. We consider constraints that define \emph{Poisson-Dirac submanifolds} \cite{32155}.
\begin{definition} \label{PD_def}
    A \textbf{Poisson-Dirac submanifold} $N\subset M$ of a Poisson manifold $(M,\pi)$ is a submanifold such that for all $p\in N$
    \begin{enumerate} [label=\textbf{PD.\arabic*},  ref=\arabic*]
        \item $T_pN\cap (T_pN)^{\pi\perp} = \{0\}$
        \item The induced bivector field on $N$ is smooth.
    \end{enumerate}
Here $(T_pN)^{\pi\perp}=\pi^\#(T_pN)^\circ $ is the $\pi$-orthogonal complement of $T_pN$, { and $(T_pN)^\circ :=\{\alpha_p\in T^\ast_pM: \alpha_p(\nu) = 0, \forall\nu\in T_pN
    \}\subset T_p^\ast M$ is the annihilator of $T_pN$}.
\end{definition}

\begin{remark}
The first condition implies that $\pi$ induces a bivector field on $N$ that is possibly non-smooth. Indeed, if $f,g$ are functions on $N$ and $p\in N$ let $\widetilde{(df_p)},\widetilde{(dg_p)}$ denote $1$-forms in $T^{\ast}_pM$ such that $\widetilde{(df_p)}\mid T_pN = df_p$, $\widetilde{(dg_p)}\mid T_pN = dg_p$, and $\widetilde{(df_p)}\mid (T_pN)^{\pi\perp} = \widetilde{(dg_p)}\mid (T_pN)^{\pi\perp} = 0$. {  Here, $\widetilde{(df_p)}\mid T_pN$ and $\widetilde{(df_p)}\mid (T_pN)^{\pi\perp}$ denote restriction of the covector $\widetilde{(df_p)}$ to the subspaces $T_pN$ and $(T_pN)^{\pi\perp}$ respectively}. The bracket between $f$ and $g$ at $p\in N$ is then defined as $\{f,g\}_N = \pi_p(\widetilde{(df_p)},\widetilde{(dg_p)})$. If $\widetilde{(df_p)}^\prime,\widetilde{(dg_p)}^\prime$ denote another pair of $1$-forms in $T_p^*M$ satisfying our conditions then $\widetilde{(df_p)}^\prime - \widetilde{(df_p)} \in (T_pN)^\circ$, which implies $\pi_p(\widetilde{(df_p)}^\prime,\widetilde{(dg_p)}^\prime) = \pi_p(\widetilde{(df_p)},\widetilde{(dg_p)})$.
\end{remark}
{  In order to gain a better intuition for \textbf{PD.1}, we provide a pictorial representation (given in Fig.1 below) and make analogy with Riemannian geometry. Let $(V,g)$ be a finite dimensional real vector space endowed with inner product $g:V\times V\to \R$. The orthogonal complement $W^\perp$ of a subspace $W\subset V$ is defined as 
\begin{equation}
    W^\perp = \{v \in V: g(v, w) = 0\quad \forall w\in W\}. 
\end{equation}
It is well known that 
\begin{equation}
    W\cap W^\perp = \{0\}.
\end{equation}
Moreover, there is a inner product defined on the dual space $V^\ast$ attained by lowering indices, which we denote by $h:V^\ast\times V^\ast\to \R$. It is straightforward to show $W^\perp= \tilde{h}(W^\circ)$, where $\tilde{h}:V^\ast\to V$ denotes the index raising map. Thus, $W\cap \tilde{h}(W^\circ) = \{0\}$, which bears a clear resemblance to the condition \textbf{PD.1}; replace $W$ with $T_pN$ and $\tilde{h}:V^*\rightarrow V$ with $\pi^\#:T^*_pM\rightarrow T_pM$.
A more detailed discussion of this point is given in Ref.~\onlinecite{32155}. When $\pi$ has no degeneracy, the condition reduces to the condition that $W$ be a symplectic subspace: $W\cap W^\omega = \{0\}$, where $W^\omega = \{v\in V: \omega(v,w)= 0\quad \forall w \in W\}$ is the $\omega$ orthogonal complement. 

At each point $p$ in a Poisson manifold the image of $\pi^\#$ is naturally a symplectic vector space. It is straightforward to show that \textbf{PD.1} implies the intersection of $T_pN$ with the image of $\pi^\#$ is a symplectic subspace.  We provide a low dimensional example of a Poisson-Dirac submanifold before proceeding.
\begin{figure}
    \centering
    \includegraphics[width=0.5\linewidth]{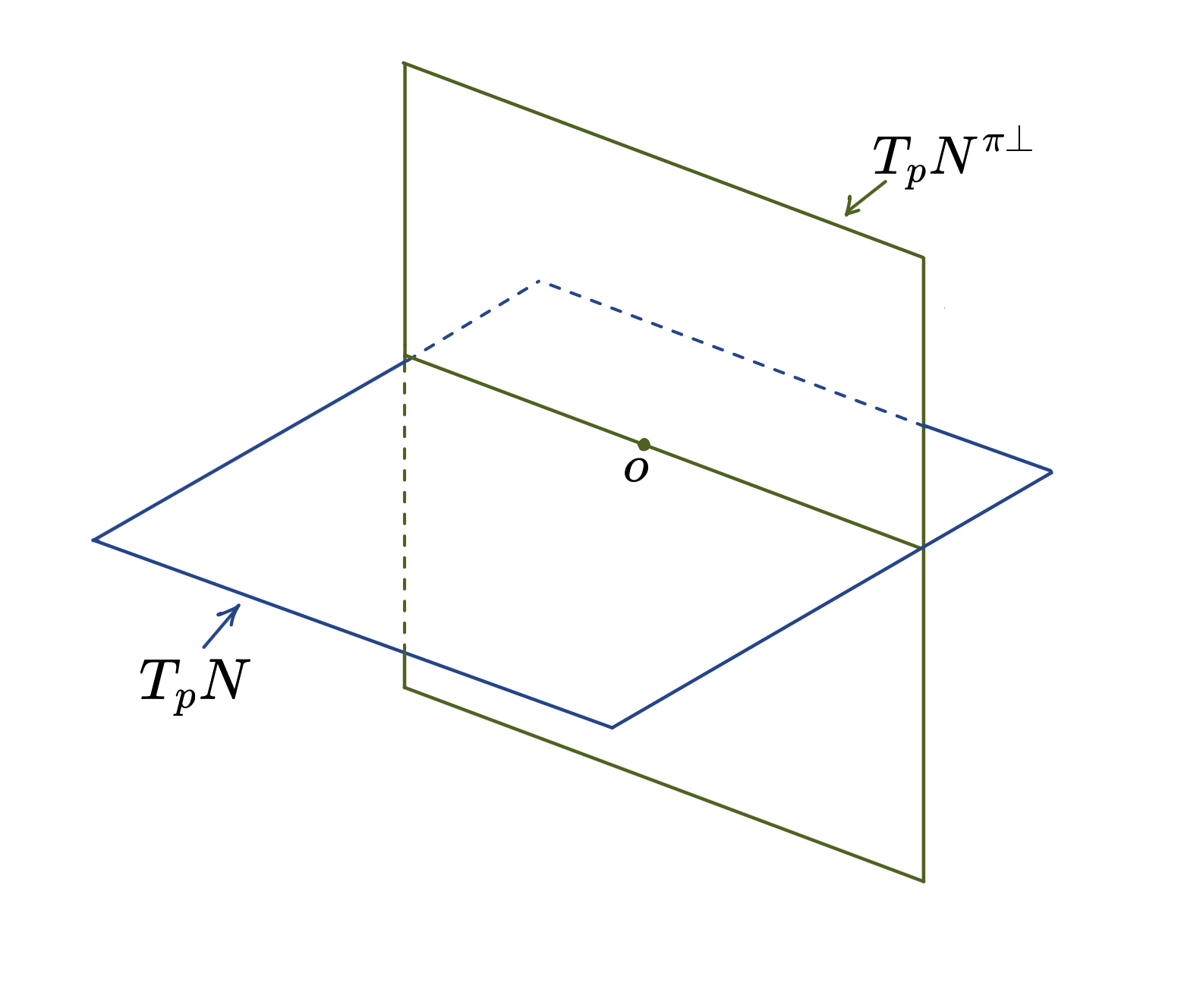}
    \caption{Visual representation of condition (1) in Def.\ref{PD_def}. The planes represent the vector spaces $T_pN$ and $(T_pN)^{\pi \perp}$ at a point $p \in  M$.}
    \label{fig:enter-label}
\end{figure}
\begin{example}
    Let $M= \R^4 \ni (x,y,z,w)$ with the Lotka-Volterra (LV) type Poisson structure\cite{32155}
    \begin{equation}
        \pi = xy\partial_x\wedge \partial_y + xw\partial_x\wedge \partial_w+zw\partial_z\wedge \partial_w + yz\partial_y\wedge \partial_z
    \end{equation}
    The submanifold given by the embedding 

    \begin{equation}
        (u,v)\mapsto (u,v,u,v)
    \end{equation}
    is a Poisson-Dirac submanifold, but not a Poisson transversal.
    \begin{proof}
        See Ref.~\onlinecite{32155} for details on the LV bracket. We proceed by characterising the spaces $T_pN$ and $(T_pN)^{\perp \pi}$. A simple calculation gives
        \begin{equation}
            T_pN = \{(\delta x,\delta y,\delta z,\delta w, p)\in \R^4\times \R^4: \delta z = \delta x,\delta w = \delta y\}
        \end{equation}
        The $\pi$ orthogonal is 
        \begin{equation}
            (T_pN)^{\perp\pi} = \{(\delta x,\delta y,\delta z,\delta w, p)\in \R^4 \times \R^4: \delta w = 0, \delta x = 0, \delta y = 2xy\alpha_x(p),\delta z = 2xy\alpha_y(p)  \}
        \end{equation}
        Where $p = (x,y,x,y)\in N$ is a point in $N$. Intersecting, we find that $\alpha_x(p),\alpha_y(p) = 0$ which implies that $T_pN\cap (T_pN)^{\perp \pi} = \{0\}$, making $N$ a Poisson-Dirac submanifold. Now, we compute $\pi_N$ by computing suitable extensions of forms $\alpha, \beta\in \Omega^1(N)$. Applying the condition that $\pi_N(\tilde{\alpha}), \pi_N(\tilde{\beta})\in T_pN$, $\tilde{\alpha}\mid N = \alpha$, and $\tilde{\beta}\mid N = \beta$, we find the induced bivector on $N$ is
        \begin{equation}
             \pi_N =  \frac{1}{2}uv\partial_u\wedge \partial_v. 
        \end{equation}
     \end{proof}
\end{example}
}
Using definition \ref{PD_def}, one can show that the following lemma holds \cite{32155}.
\begin{lemma}\label{Extension_condition}
    A submanifold $N\subset M$ is a Poisson-Dirac submanifold of a Poisson manifold $(M,\pi)$ if and only if for every one form $\alpha \in \Omega(N)$ there exists a smooth one form $\tilde{\alpha}\in \Omega(M)$ such that 
    \begin{enumerate}
        \item $\tilde{\alpha}|_{TN} = \alpha$
        \item ${\pi^\#}(\tilde{\alpha})$ is tangent to $N$.
    \end{enumerate}
    { Here, $\tilde{\alpha}|_{TN}$ means we are restricting $\tilde{\alpha}$ to points on $N$ and restricting the inputs to be vectors in $T_pN$.}
\end{lemma}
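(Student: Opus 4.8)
The plan is to prove the two implications separately, beginning with the easier ``if'' direction. Suppose every $\alpha\in\Omega(N)$ admits a smooth extension $\tilde\alpha\in\Omega(M)$ with $\tilde\alpha|_{TN}=\alpha$ and $\pi^\#(\tilde\alpha)$ tangent to $N$. To obtain \textbf{PD.1}, fix $p\in N$ and $v\in T_pN\cap(T_pN)^{\pi\perp}$, and write $v=\pi^\#(\xi)$ with $\xi\in(T_pN)^\circ$ (possible since $(T_pN)^{\pi\perp}=\pi^\#(T_pN)^\circ$). For an arbitrary $\alpha_p\in T^\ast_pN$ pick $\alpha\in\Omega(N)$ realising it, let $\tilde\alpha$ be the promised extension, and use $v\in T_pN$, antisymmetry of $\pi$, and the facts $\xi\in(T_pN)^\circ$, $\pi^\#(\tilde\alpha_p)\in T_pN$ to get
\begin{equation*}
\langle\alpha_p,v\rangle=\langle\tilde\alpha_p,\pi^\#(\xi)\rangle=\pi_p(\tilde\alpha_p,\xi)=-\langle\xi,\pi^\#(\tilde\alpha_p)\rangle=0,
\end{equation*}
whence $v=0$. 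For \textbf{PD.2}, given $f,g\in C^\infty(N)$ take smooth extensions $\widetilde{df},\widetilde{dg}$ as above and show $\{f,g\}_N=\pi(\widetilde{df},\widetilde{dg})\big|_N$: any two extensions with the tangency property have the same $\pi^\#$-image at each $p\in N$, since their difference lies in $(T_pN)^\circ$ and its $\pi^\#$-image lies in $T_pN\cap(T_pN)^{\pi\perp}=\{0\}$ by \textbf{PD.1}; hence $\pi^\#(\widetilde{dg}_p)$ agrees with the $\pi^\#$-image of the canonical extension used in the Remark, and pairing against $\widetilde{df}_p$ (which restricts to $df_p$ on $T_pN$, where that image lives) reproduces $\{f,g\}_N(p)$. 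As $\pi(\widetilde{df},\widetilde{dg})$ is smooth on $M$ and $f,g$ were arbitrary, $\pi_N$ is smooth.

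For the ``only if'' direction, suppose $N$ is Poisson-Dirac; I separate a pointwise step from a smoothness step. Pointwise: fix $p\in N$ and $\alpha_p\in T^\ast_pN$, and choose any $\beta_p\in T^\ast_pM$ with $\beta_p|_{T_pN}=\alpha_p$. As noted after Def.~\ref{PD_def}, \textbf{PD.1} makes $T_pN\cap S_p$ a symplectic subspace of $S_p:=\operatorname{im}\pi^\#_p$, and a short linear-algebra computation identifies its symplectic complement inside $S_p$ with $(T_pN)^{\pi\perp}$, so that $S_p=(T_pN\cap S_p)\oplus(T_pN)^{\pi\perp}$. Decomposing $\pi^\#(\beta_p)=u_p+v_p$ along this splitting and writing $v_p=\pi^\#(\gamma_p)$ with $\gamma_p\in(T_pN)^\circ$, the covector $\tilde\alpha_p:=\beta_p-\gamma_p$ satisfies $\tilde\alpha_p|_{T_pN}=\alpha_p$ and $\pi^\#(\tilde\alpha_p)=u_p\in T_pN$; by \textbf{PD.1} the value $u_p$ is independent of all choices and equals $\pi_N^\#(\alpha_p)$, the image of $\alpha_p$ under the anchor of the induced bivector.

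It remains to make the extension globally smooth, and this is where the real work lies. First I would reduce to a local existence statement: the admissible extensions at each $p\in N$ form an affine subspace, and the tangency condition $\pi^\#(\tilde\alpha_p)\in T_pN$ survives affine combinations with coefficients summing to one, so a partition of unity on $M$ subordinate to $\{M\setminus N\}$ together with neighbourhoods of points of $N$ (taking $\tilde\alpha=0$ off $N$) glues local solutions into a global one. Locally near $p_0\in N$, start from any smooth extension $\hat\alpha$ of $\alpha$; by the pointwise step and \textbf{PD.2}, the section $n:=\pi^\#(\hat\alpha)\big|_N-\pi_N^\#(\alpha)$ of $TM|_N$ is smooth and lands pointwise in $(T_pN)^{\pi\perp}=\pi^\#\big((T_pN)^\circ\big)$, so it suffices to find a smooth section $\gamma$ of the annihilator bundle $(TN)^\circ\subset T^\ast M|_N$ with $\pi^\#(\gamma)=n$, after which $\tilde\alpha:=\hat\alpha-\gamma$ works. \textbf{Producing this smooth $\gamma$ is the main obstacle}: $\pi^\#$ restricted to $(TN)^\circ$ need not have locally constant rank, so pointwise solvability does not globalise for free. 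I expect to resolve it as in Ref.~\onlinecite{32155}, using \textbf{PD.2} --- which is precisely what makes $n$, hence the equation $\pi^\#\gamma=n$, vary smoothly --- together with a local frame/pseudo-inverse argument for $\pi^\#|_{(TN)^\circ}$ on a complement of its kernel, patched once more by a partition of unity. The remaining checks that $\tilde\alpha$ is smooth and restricts correctly are routine.
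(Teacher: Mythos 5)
First, a point of reference: the paper does not actually prove this lemma --- it is quoted from Ref.~\onlinecite{32155} --- so your attempt covers more ground than the text does. Your ``if'' direction is correct and complete: the chain $\langle\alpha_p,v\rangle=\pi_p(\tilde\alpha_p,\xi)=-\langle\xi,\pi^\#(\tilde\alpha_p)\rangle=0$ gives \textbf{PD.1}, and your observation that any two tangency-preserving extensions of a covector have the same $\pi^\#$-image (their difference lies in $(T_pN)^\circ$, so its image lies in $T_pN\cap(T_pN)^{\pi\perp}=\{0\}$) correctly identifies $\pi(\widetilde{df},\widetilde{dg})\big|_N$ with the bracket of the Remark, giving \textbf{PD.2}. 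The pointwise half of the ``only if'' direction is also sound: the symplectic orthogonal of $T_pN\cap S_p$ inside $S_p=\operatorname{im}\pi^\#_p$ is indeed $(T_pN)^{\pi\perp}$, and the resulting splitting produces a covector $\tilde\alpha_p$ with $\tilde\alpha_p|_{T_pN}=\alpha_p$ and $\pi^\#(\tilde\alpha_p)=\pi_N^\#(\alpha_p)$.

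The gap is the one you flagged, and the repair you propose would not close it. You must solve $\pi^\#\gamma=n$ smoothly with $\gamma$ a section of $(TN)^\circ$, and you suggest a ``local frame/pseudo-inverse argument for $\pi^\#|_{(TN)^\circ}$ on a complement of its kernel.'' But the rank of $\ker\bigl(\pi^\#|_{(TN)^\circ}\bigr)$ --- equivalently of $(TN)^{\pi\perp}$ --- is exactly what is permitted to jump for a Poisson--Dirac submanifold that is not coregular: a smooth complement of the kernel need not exist, and the pseudo-inverse is generically discontinuous at rank-changing points. This is not a hypothetical worry; the paper's Lotka--Volterra example and the appendix are built around the observation that the pseudo-inverse of the constraint matrix fails to be smooth precisely where the rank of $T_pN^{\pi\perp}$ drops. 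As written, your sketch therefore establishes the hard direction only for coregular Poisson--Dirac submanifolds and Poisson transversals, which is the regime the lemma is meant to transcend. Smoothness of the right-hand side $n$ (the only place you invoke \textbf{PD.2}) does not yield a smooth preimage under a bundle map of non-constant rank; closing the argument requires using \textbf{PD.2} in an essentially different way, as in the proof in Ref.~\onlinecite{32155}, rather than a pointwise inversion patched by partitions of unity.
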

 Thus, one naturally obtains a Poisson structure $\pi_N$ on $N$ computed by evaluating extensions {  $\tilde{\alpha}, \tilde{\beta}\in \Omega^1(M)$ of one forms $\alpha, \beta\in \Omega^1(N)$ as prescribed above and restricting to $N$:}
\begin{equation} \label{PD_bracket_ext}
    \pi_N(\alpha, \beta) = \pi(\tilde{\alpha}, \tilde{\beta})|_N.
\end{equation}
{  For a coordinate free, dimension independent proof of the Jacobi identity for this bivector, see Ref.~\onlinecite{32155}. 
\begin{figure}
    \centering
    \includegraphics[width=0.5\linewidth]{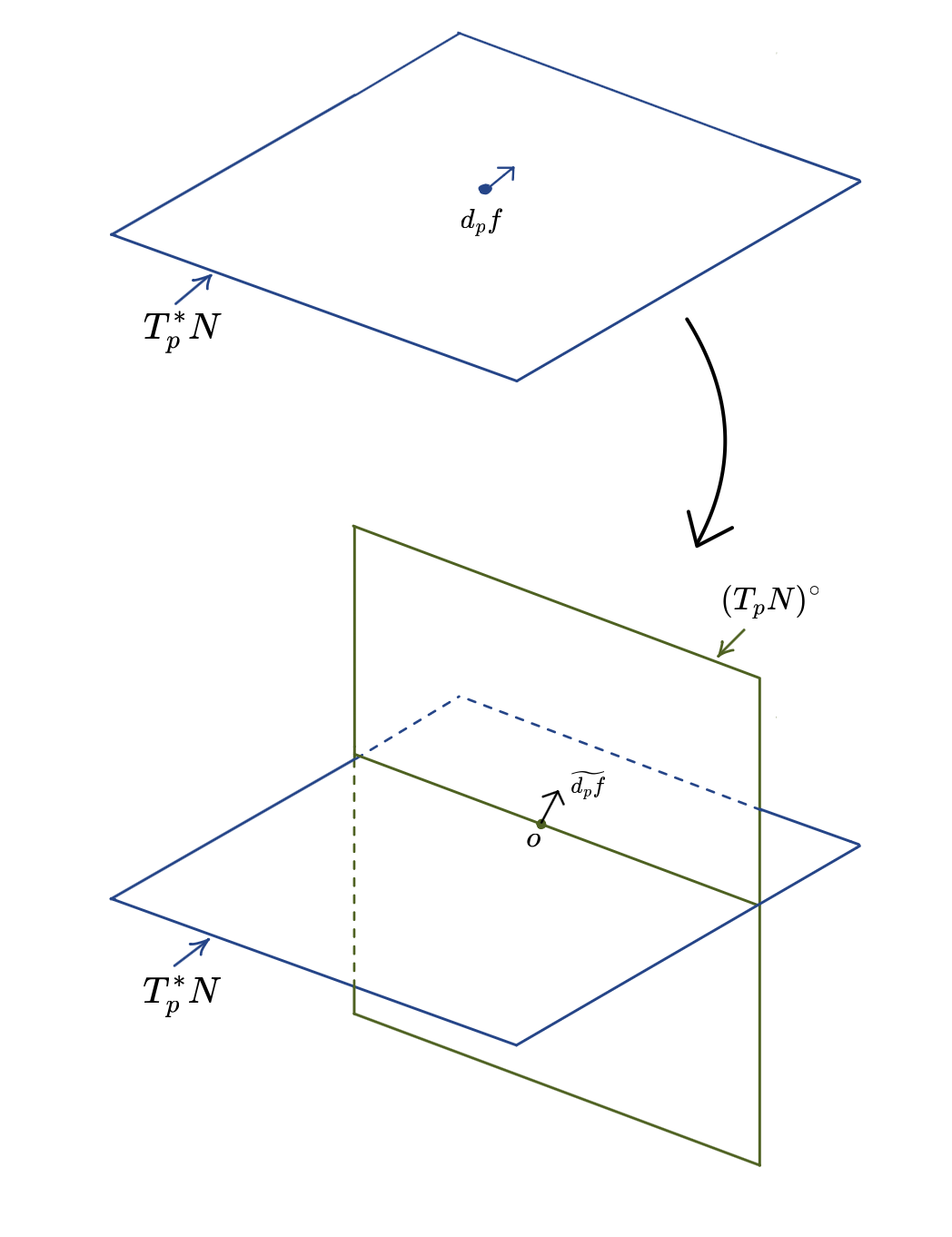}
    \caption{Visual representation of Lemma 2. The planes represent the vector spaces $(T_pN)^\circ$ and $T^\ast_pN$ at a point $p \in  M$.}
    \label{fig:extend_Func}
\end{figure}
\begin{remark}
    Notice that such extensions are guaranteed by the first condition in definition \ref{PD_def} in the definition of a Poisson-Dirac submanifold. 
    Thus, the condition allows us to find extensions in $T^\ast M$ that vanish on $(T_pN)^{\pi\perp}$. Notice that condition implies that $T_pM =T_pN \oplus (T_pN)^{\pi\perp} \oplus R_p$ where $R_p$ is some residual subspace that contains vectors not in $T^\ast_pN$ or $(T_pN)^{\pi\perp}$. There is no obvious block diagonal structure for $\pi^\#$ when expressed with respect to direct sum decomposition $T_pN \oplus (T_pN)^{\pi\perp} \oplus R_p$, in contrast to what happens with Poisson transversals (i.e. when $R_p = \{0\}$). A visual representation of the extension procedure is given in figure \ref{fig:extend_Func} above. 
\end{remark}
} 
To relate the general theory of Poisson-Dirac submanifolds to the coordinate-based DCT discussed earlier, we will now consider phase spaces of the form $\Sigma\times C$ where $\Sigma,C$ are vector spaces. The Poisson anchor map on such a space takes the convenient block form:
\begin{equation}
    \pi^\#= \begin{pmatrix}
        \pi^\#_{cc} & {\pi}^\#_{c\sigma}\\
        {\pi}^\#_{\sigma c} & {\pi}^\#_{\sigma \sigma}
    \end{pmatrix},
\end{equation}
where each ${\pi^\#}_{ij}$ denotes a linear map where the subscripts  $j$, $i$ (being either $C$ or $\Sigma$) denote the source and target respectively, and { 
\begin{equation}
    ((\pi^\#)^{ij})^\ast = -(\pi^\#)^{ji}
\end{equation}
Where the upper star denotes the dual map. In general, the diagonal terms in the above Poisson bracket are not necessarily zero.}
With all the relevant definitions stated, we demonstrate that \eqref{ker_condition} holds if and only if $\Sigma$ is a Poisson-Dirac submanifold. 
\begin{prop}
     $\Sigma$ is a Poisson-Dirac submanifold if and only if $\ker \{c,c\} \subset \ker \{\sigma,c\}$ and the bracket defined in equation \eqref{PD_bracket} is smooth.
\end{prop}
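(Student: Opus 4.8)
The plan is to carry out the whole argument in the block splitting $T_pM=C\oplus\Sigma$, $T^\ast_pM=C^\ast\oplus\Sigma^\ast$ induced by $M=\Sigma\times C$, translating each clause of Definition \ref{PD_def} into a statement about the blocks $\pi^\#_{cc}=\{c,c\}$, $\pi^\#_{\sigma c}=\{\sigma,c\}$, $\pi^\#_{c\sigma}=\{c,\sigma\}=-\{\sigma,c\}^\ast$ and $\pi^\#_{\sigma\sigma}=\{\sigma,\sigma\}$, all evaluated along the submanifold $\Sigma=\{c=0\}$. First I would record the elementary identifications $T_p\Sigma=\{0\}\oplus\Sigma$ and $(T_p\Sigma)^\circ=C^\ast\oplus\{0\}$, whence $(T_p\Sigma)^{\pi\perp}=\pi^\#(C^\ast\oplus\{0\})=\{(\{c,c\}\beta,\ \{\sigma,c\}\beta):\beta\in C^\ast\}$.

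For clause \textbf{PD.1} I would simply intersect these two subspaces: a vector $(0,v)\in T_p\Sigma$ lies in $(T_p\Sigma)^{\pi\perp}$ exactly when some $\beta\in C^\ast$ satisfies $\{c,c\}\beta=0$ and $\{\sigma,c\}\beta=v$, so $T_p\Sigma\cap(T_p\Sigma)^{\pi\perp}=\{(0,\{\sigma,c\}\beta):\beta\in\ker\{c,c\}\}$, which collapses to $\{0\}$ iff $\ker\{c,c\}\subset\ker\{\sigma,c\}$. This gives the equivalence \textbf{PD.1}$\,\Leftrightarrow\,$\eqref{ker_condition}, with no appeal to smoothness.

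Next, under the standing hypothesis \textbf{PD.1}, I would show the induced bracket on $\Sigma$ is exactly \eqref{PD_bracket}. Identifying a covector $\alpha\in T^\ast_p\Sigma$ with its $\Sigma^\ast$-component $\fdv{f}{\sigma}$, the pointwise extension $\widetilde\alpha=(\widetilde\alpha_c,\fdv{f}{\sigma})$ used in the Remark after Definition \ref{PD_def} is required to make $\pi^\#\widetilde\alpha$ tangent to $\Sigma$, i.e. to kill its $C$-component $\{c,c\}\widetilde\alpha_c+\{c,\sigma\}\fdv{f}{\sigma}$ --- which is precisely \eqref{PD_eqns} with $\widetilde\alpha_c=\frac{\Delta f}{\Delta c}$. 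I would then note that this linear system is solvable precisely because of \textbf{PD.1}: its solvability condition $\langle\beta,\{c,\sigma\}\fdv{f}{\sigma}\rangle=0$ for $\beta\in\ker\{c,c\}$ holds since $\langle\beta,\{c,\sigma\}\fdv{f}{\sigma}\rangle=-\langle\{\sigma,c\}\beta,\fdv{f}{\sigma}\rangle$ and $\{\sigma,c\}\beta=0$. Substituting $\widetilde\alpha,\widetilde\beta$ into $\pi_\Sigma(\alpha,\beta)=\langle\widetilde\alpha,\pi^\#\widetilde\beta\rangle|_\Sigma$ and using that $\pi^\#\widetilde\beta$ has no $C$-component leaves $\langle\fdv{f}{\sigma},\{\sigma,\sigma\}\fdv{g}{\sigma}\rangle+\langle\fdv{f}{\sigma},\{\sigma,c\}\frac{\Delta g}{\Delta c}\rangle$; a short dual-map manipulation with $\{c,\sigma\}=-\{\sigma,c\}^\ast$, $\{c,c\}^\ast=-\{c,c\}$ and \eqref{PD_eqns} turns the second term into $-\frac{\Delta f}{\Delta c}\cdot\{c,c\}\cdot\frac{\Delta g}{\Delta c}$, reproducing \eqref{PD_bracket}. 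Independence of the chosen solution $\frac{\Delta f}{\Delta c}$ is the computation already given in that Remark.

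Finally I would close both directions at once: by the previous step, for smooth $f,g$ the quantity $\{f,g\}_{\text{PD}}$ is the value of the induced bivector $\pi_\Sigma$ on the exact forms $df|_\Sigma, dg|_\Sigma$, so clause \textbf{PD.2} (smoothness of $\pi_\Sigma$) is equivalent to smoothness of $\{f,g\}_{\text{PD}}$ for all smooth $f,g$; together with the \textbf{PD.1} equivalence this is the Proposition. I expect the only genuinely delicate point to be the sign-and-adjoint bookkeeping in the third step --- checking that the tangency condition of Lemma \ref{Extension_condition} is literally \eqref{PD_eqns} and that its solvability condition is literally \eqref{ker_condition} --- together with the caveat that, in the infinite-dimensional applications, these linear-algebra manipulations are to be read formally (as throughout the paper), with range/closure issues absorbed into the solvability hypothesis for \eqref{PD_eqns} and the smoothness hypothesis on \eqref{PD_bracket}.
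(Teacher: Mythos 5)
Your proof is correct and follows essentially the same route as the paper: characterize $(T_p\Sigma)^{\pi\perp}$ in the block splitting, show $T_p\Sigma\cap(T_p\Sigma)^{\pi\perp}=\{0\}$ holds iff $\ker\{c,c\}\subset\ker\{\sigma,c\}$, and identify smoothness of the induced bivector with smoothness of $\{f,g\}_{\text{PD}}$. The only difference is organizational: the verification that the induced bracket coincides with \eqref{PD_bracket} via the extension equations \eqref{PD_eqns}, which you carry out inline (and which is needed to justify the smoothness clause), is deferred in the paper to the immediately following proposition.
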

\begin{proof}
    Let us formulate the Poisson-Dirac condition algebraically. Suppose that $\nu\in T_p\Sigma\cap (T_p\Sigma)^{\perp \pi}$, and let $\pi^\#_{cc}:C^\ast \to C$, $\pi^\#_{\sigma c}:C^\ast \to T\Sigma$ be the components of the Poisson anchor with source $C^\ast$. Given $\nu \in (T_p\Sigma)^{\perp \pi}$, there exists an $\alpha_c \in T_p^\ast C$ with 
    \begin{equation}
        \nu =\pi^\#_{cc} (\alpha_c) + \pi^\#_{\sigma c} (\alpha_c)
    \end{equation}
    enforcing the condition $\nu\in T_p\Sigma$ we require that $\alpha_c\in \ker \pi^\#_{cc}$. In order for $\Sigma$ to be Poisson Dirac, we require that $\alpha_c\in \ker\pi^\#_{\sigma c}$. Suppose towards a contradiction that $\ker \pi^\#_{c c}\not\subset \ker \pi^\#_{\sigma c}$. Then there exist elements of $\ker \pi^\#_{cc}$ that are not contained within $\ker \pi^\#_{\sigma c}$, which would contradict that the only the zero vector is contained within the intersection.

    Going in the other direction, Given $\nu \in (T_p\Sigma)^{\perp \pi}$, there exists a $\alpha_c \in T_p^\ast C$ with 
    \begin{equation}
        \nu =\pi^\#_{cc} (\alpha_c) + \pi^\#_{\sigma c} (\alpha_c)
    \end{equation}
    Suppose that $\nu$ is non zero and lies within the intersection $(T_p\Sigma)^{\perp \pi}\cap T_p\Sigma$. Given $\nu\in T_p\Sigma$, this forces $\pi^\#_{cc} (\alpha_c) = 0$ which implies $\alpha_c\in \ker \pi^\#_{cc}$. But $\ker \pi^\#_{cc} \subset \ker \pi^\#_{\sigma c}$, so $\nu = 0$. Given $\nu$ was arbitrary, we conclude that $(T_p\Sigma)^{\perp \pi}\cap T_p\Sigma= \{0\}$. Finally, the bracket defined in equation \eqref{PD_bracket} is the induced bracket on $\Sigma$ so condition \textbf{PD.1} holds if and only if \eqref{PD_bracket} is smooth.

    Thus, the necessary and sufficient condition that $\Sigma$ is Poisson-Dirac is given by the kernel nesting condition along with a smoothness condition on the induced pointwise bivector. 
\end{proof}
{  
\begin{prop}
    The bracket \eqref{PD_bracket} is equivalent to the one obtained in Lemma \ref{Extension_condition}.
\end{prop}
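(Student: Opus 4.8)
The plan is to express the coordinate--free bracket \eqref{PD_bracket_ext}, evaluated on the exact one--forms $\alpha=df$, $\beta=dg$ for $f,g\in C^\infty(\Sigma)$, in the product coordinates $(\sigma,c)$ and to check that it reproduces \eqref{PD_bracket} term by term. Fix $p\in\Sigma$. Since $\Sigma=\{c=0\}$ is the linear subspace $\Sigma\times\{0\}\subset\Sigma\times C$, the ambient cotangent space splits as a direct sum of a $\sigma$--covector part and a $c$--covector part, and condition (1) of Lemma \ref{Extension_condition} forces any admissible extension $\tilde\alpha$ of $df_p$ to have $\sigma$--part equal to $\fdv{f}{\sigma}$, leaving the $c$--part free; call the latter $a_c$. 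Applying the same reasoning to $g$, an admissible extension $\tilde\beta$ of $dg_p$ has $\sigma$--part $\fdv{g}{\sigma}$ and free $c$--part $b_c$.

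Next I would impose condition (2) of Lemma \ref{Extension_condition}: $\pi^\#(\tilde\alpha)$ must be tangent to $\Sigma$. With $\pi^\#$ in the block form displayed above, the $c$--component of $\pi^\#(\tilde\alpha)$ is $\pi^\#_{cc}(a_c)+\pi^\#_{c\sigma}(\fdv{f}{\sigma})=\{c,c\}\cdot a_c+\{c,\sigma\}\cdot\fdv{f}{\sigma}$, and tangency to $\{c=0\}$ is exactly the vanishing of this component. Hence $a_c$ must solve $\{c,\sigma\}\cdot\fdv{f}{\sigma}+\{c,c\}\cdot a_c=0$, so that $a_c=\frac{\Delta f}{\Delta c}$ is precisely a solution of \eqref{PD_eqns}, and likewise $b_c=\frac{\Delta g}{\Delta c}$. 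Existence is guaranteed because $\Sigma$ is Poisson--Dirac (Lemma \ref{Extension_condition}), equivalently by the solvability conditions checked right after \eqref{PD_eqns}; their non--uniqueness modulo $\ker\{c,c\}$ corresponds exactly to the freedom of adding to $\tilde\alpha$ a covector in $(T_p\Sigma)^{\pi\perp}$, which affects neither the value of \eqref{PD_bracket_ext} (by the annihilator argument of the Remark following Definition \ref{PD_def}) nor that of \eqref{PD_bracket}.

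I would then evaluate \eqref{PD_bracket_ext}. Using $\pi(\tilde\alpha,\tilde\beta)=\langle\tilde\alpha,\pi^\#(\tilde\beta)\rangle$ and the fact that $\pi^\#(\tilde\beta)$ has vanishing $c$--component (that is condition (2) for $\tilde\beta$), only the $\sigma$--parts survive the pairing, whence
\[
\pi_\Sigma(df,dg)=\fdv{f}{\sigma}\cdot\big(\pi^\#_{\sigma\sigma}\fdv{g}{\sigma}+\pi^\#_{\sigma c}\frac{\Delta g}{\Delta c}\big)\Big|_\Sigma=\fdv{f}{\sigma}\cdot\{\sigma,\sigma\}_\Sigma\cdot\fdv{g}{\sigma}+\fdv{f}{\sigma}\cdot\{\sigma,c\}_\Sigma\cdot\frac{\Delta g}{\Delta c}.
\]
It remains to convert the mixed term into the second term of \eqref{PD_bracket}: contracting the defining equation \eqref{PD_eqns} for $\frac{\Delta f}{\Delta c}$ on the left with $\frac{\Delta g}{\Delta c}$ and using $((\pi^\#)^{ij})^\ast=-(\pi^\#)^{ji}$ (so $\{c,\sigma\}$ is minus the transpose of $\{\sigma,c\}$ and $\{c,c\}$ is antisymmetric) gives $\fdv{f}{\sigma}\cdot\{\sigma,c\}\cdot\frac{\Delta g}{\Delta c}=-\frac{\Delta f}{\Delta c}\cdot\{c,c\}\cdot\frac{\Delta g}{\Delta c}$. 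Restricting to $\Sigma$ and substituting into the display reproduces \eqref{PD_bracket} verbatim; as a consistency check, the resulting expression is antisymmetric under $f\leftrightarrow g$ by the antisymmetry of $\pi$.

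The calculation is otherwise routine, so the step requiring real care is the bookkeeping: matching the blocks of $\pi^\#$ with the coordinate brackets $\{c,c\}$, $\{c,\sigma\}$, $\{\sigma,c\}$, $\{\sigma,\sigma\}$, tracking the duality sign $((\pi^\#)^{ij})^\ast=-(\pi^\#)^{ji}$ through the final contraction, and bearing in mind that each ``matrix'' is to be restricted to $\Sigma$ and, in the infinite--dimensional applications below, read as functional--derivative notation rather than a literal finite matrix. It is also worth recording that the smoothness hypothesis in the Proposition (equivalently, that $\Sigma$ be Poisson--Dirac) is exactly what lets the extensions $\tilde\alpha,\tilde\beta$ be chosen smooth over all of $\Sigma$, so \eqref{PD_bracket_ext} and \eqref{PD_bracket} agree as global brackets and not merely pointwise.
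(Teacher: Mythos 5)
Your proposal is correct and follows essentially the same route as the paper: parametrize admissible extensions over $\Sigma$ by a free $c$-part, observe that the tangency condition of Lemma \ref{Extension_condition} reproduces exactly \eqref{PD_eqns}, and then evaluate $\pi(\tilde\alpha,\tilde\beta)$ to recover \eqref{PD_bracket}. The only cosmetic difference is that the paper writes the extension at the level of functions, $\tilde f = f(\sigma)+h(c)$, and lands directly on $-\langle dh,\pi^\#_{cc}(dr)\rangle$, whereas you work with one-forms and convert the cross term $\fdv{f}{\sigma}\cdot\{\sigma,c\}\cdot\frac{\Delta g}{\Delta c}$ via \eqref{PD_eqns} and antisymmetry; both computations are equivalent.
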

\begin{proof}
    Now, let $f,g\in C^{\infty}(\Sigma)$ be smooth functions on $\Sigma$. Admissible extensions are of the general form $\tilde{f}(c,\sigma) = f(\sigma) + h(c), \tilde{g}(c,\sigma) = g(\sigma) + r(c)$. Taking the exterior derivative and applying $\pi^\#$, we find that the equations to solve for $h(c)$ are given by
    \begin{equation}
        \pi^\#_{c\sigma} (df) + \pi^\#_{cc}(dh) = 0
    \end{equation}
    which are exactly the equations given in \eqref{PD_eqns}. Given the submanifold is Poisson-Dirac, the solvability conditions are satisfied, and we write the Poisson bracket on $\Sigma$ 
    \begin{equation}
        \pi_{\Sigma}(df,dg) = \pi(d\tilde{f}, d\tilde{g})\mid_{\Sigma} = \langle df, \pi^\#_{\sigma\sigma}(dg)\rangle\mid_{\Sigma} - \langle dh, \pi_{cc}^\#(dr)\rangle.
    \end{equation}
    As expected, this matches the bracket given in equation \eqref{PD_bracket}.
\end{proof}
}

Translating DCT into the language of geometry, we see that DCT is a special type of Poisson-Dirac submanifold called a \emph{Poisson transversal}\cite{32155, AW_LSoPM}.
\begin{definition}
    A \textbf{Poisson transversal} $N\subset M$ of a Poisson manifold $(M,\pi)$ is a submanifold such that for all $p\in N$
    \begin{enumerate} [label=\textbf{PT.\arabic*}]
        \item $T_pN\cap (T_pN)^{\pi\perp} = \{0\}$
        \item $T_pN + \Im\pi_p^\# = T_pM$.
    \end{enumerate}
\end{definition}
{   
\begin{remark}
    Note that the submanifold as given in the definition above $N$ cannot be a Poisson submanifold, as the transversality condition is not satisfied for Poisson submanifolds. Even though Poisson submanifolds are not Poisson transversals, they are Poisson-Dirac submanifolds as $(T_pN)^{\pi \perp }=\{0\}$. 
\end{remark}
\begin{remark}
    It is worth noting that when $\pi$ has non constant rank, the foliation by symplectic leaves is potentially singular. The above definition does not imply that the dimension of the image of $\pi$ is constant.
\end{remark}
}
Transversality enforces several important properties such as the smoothness of $\pi$ restricted to $N$. In the below proposition, we demonstrate that the invertibility condition on $\pi_{cc}^\#$ is satisfied if and only if $\Sigma$ is a Poisson transversal.
\begin{prop}
    Within standard DCT, $\{c,c\}$ is invertible if and only if the constraint set is a Poisson transversal.
\end{prop}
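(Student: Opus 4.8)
The plan is to reduce everything to linear algebra at a point $p\in\Sigma$, using the block decomposition of $\pi^\#_p$ together with two standard facts. First, in the splitting $T_p^\ast M = C^\ast\oplus\Sigma^\ast$ the annihilator $(T_p\Sigma)^\circ$ is exactly the set of covectors of the form $(\alpha_c,0)$. Second, skew-symmetry of the anchor, $(\pi^\#)^\ast=-\pi^\#$ (and likewise $(\pi^\#_{cc})^\ast=-\pi^\#_{cc}$), gives the duality relations $(\Im\pi_p^\#)^\circ=\ker\pi_p^\#$ and $(\Im\pi^\#_{cc})^\circ=\ker\pi^\#_{cc}$. Taking annihilators of both sides of \textbf{PT.2} and using $(A\cap B)^\circ = A^\circ+B^\circ$, condition \textbf{PT.2} becomes the purely algebraic statement $(T_p\Sigma)^\circ\cap\ker\pi_p^\# = \{0\}$. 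Since \textbf{PT.1} is literally \textbf{PD.1}, the proposition characterising Poisson-Dirac submanifolds (whose proof establishes \textbf{PD.1} alone $\iff$ kernel nesting, pointwise) tells us \textbf{PT.1} holds iff $\ker\pi^\#_{cc}\subset\ker\pi^\#_{\sigma c}$.

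For the forward direction I would assume $\{c,c\}=\pi^\#_{cc}$ is invertible. Then $\ker\pi^\#_{cc}=\{0\}\subset\ker\pi^\#_{\sigma c}$, so \textbf{PT.1} holds by the above. For \textbf{PT.2}, take any $(\alpha_c,0)\in(T_p\Sigma)^\circ\cap\ker\pi_p^\#$; reading off the $C$-component (first block) of $\pi_p^\#(\alpha_c,0)=0$ gives $\pi^\#_{cc}(\alpha_c)=0$, hence $\alpha_c=0$ by invertibility. So the intersection is trivial and \textbf{PT.2} holds, making $\Sigma$ a Poisson transversal.

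For the converse I would assume $\Sigma$ is a Poisson transversal. From \textbf{PT.1} and the characterisation above, $\ker\pi^\#_{cc}\subset\ker\pi^\#_{\sigma c}$. Let $\alpha_c\in\ker\pi^\#_{cc}$; then $\alpha_c\in\ker\pi^\#_{\sigma c}$ as well, so $\pi_p^\#(\alpha_c,0) = (\pi^\#_{cc}(\alpha_c),\pi^\#_{\sigma c}(\alpha_c)) = 0$, i.e. $(\alpha_c,0)\in(T_p\Sigma)^\circ\cap\ker\pi_p^\#$. By \textbf{PT.2} in its reformulated form this forces $\alpha_c=0$, so $\pi^\#_{cc}$ is injective. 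Skew-symmetry then upgrades injectivity to invertibility: $(\Im\pi^\#_{cc})^\circ=\ker\pi^\#_{cc}=\{0\}$ gives $\Im\pi^\#_{cc}=C$, so $\{c,c\}$ is invertible.

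The main obstacle is not any individual computation but keeping the bookkeeping consistent: tracking which block index denotes source versus target, correctly identifying $(T_p\Sigma)^\circ$ inside $T_p^\ast M$, and justifying the passage from injectivity of $\pi^\#_{cc}$ to its invertibility. The linchpin is the skew-symmetry identity $(\pi^\#)^\ast=-\pi^\#$, which is used both to recast \textbf{PT.2} in terms of $\ker\pi_p^\#$ and to close the injectivity-to-surjectivity gap; in finite dimensions the latter is immediate from $\dim C^\ast=\dim C$, but invoking the identity keeps the argument coordinate-free and robust.
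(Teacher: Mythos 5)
Your proposal is correct and rests on the same ingredients as the paper's proof: the block decomposition of $\pi^\#$ relative to $(\sigma,c)$, the identification $(T_p\Sigma)^\circ \cong C^\ast$, and the skew-symmetry identity that converts between images and kernels via annihilators. The forward direction is essentially identical to the paper's. Where you diverge is in the converse: the paper asserts that transversality (\textbf{PT.2}) directly forces $\pi^\#_{cc}$ to be surjective and then deduces injectivity from $\Im(\pi^\#_{cc})^\ast \cong (\ker\pi^\#_{cc})^\circ$, whereas you run the argument in the opposite order, first extracting injectivity of $\pi^\#_{cc}$ from \textbf{PT.1} together with the dual form of \textbf{PT.2}, and then upgrading to surjectivity by skew-symmetry. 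Your ordering is the tighter one: transversality alone only gives $\Im\pi^\#_{cc} + \Im\pi^\#_{c\sigma} = C$, so the paper's surjectivity claim implicitly leans on \textbf{PT.1} (via $\Im\pi^\#_{c\sigma} = (\ker\pi^\#_{\sigma c})^\circ \subset (\ker\pi^\#_{cc})^\circ = \Im\pi^\#_{cc}$) without saying so, while your injectivity-first route makes the use of both conditions explicit. One small bookkeeping point: to pass from \textbf{PT.2} to $(T_p\Sigma)^\circ \cap \ker\pi_p^\# = \{0\}$ you should invoke $(A+B)^\circ = A^\circ\cap B^\circ$ rather than $(A\cap B)^\circ = A^\circ + B^\circ$ (the latter is what you need for the reverse implication); both are standard in finite dimensions, so this does not affect correctness.
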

\begin{proof}
    In the coordinates $(\sigma,c)$, the tangent space to the constraint is given $T_p\Sigma = \{(\delta \sigma, \delta c): \delta c=0\}$. Assuming that the matrix $\{c,c\}_{ij}$ is invertible, then the $\pi^\#_{cc}:C^\ast \to C$ section of the block matrix must be invertible, so that $T_p\Sigma + T_pL_p = T_pM$. Suppose $\nu\in T_p\Sigma\cap T_p\Sigma^{\perp \pi}$. Then, there exists an $\alpha \in T_p\Sigma^{\circ} = T_p^\ast C$ such that 
    \begin{equation}
        \nu = \pi^\#_{cc}(\alpha) + \pi^\#_{\sigma c}(\alpha).
    \end{equation}
    Given $\nu\in T_p\Sigma$,  $\pi^\#_{cc}\alpha$ must be zero. Given $\pi^\#_{cc}$ is invertible, the only solution is $\alpha = 0$ so that $\nu =0$ as required. Going the other direction, the transversality condition implies that $\pi^\#_{cc}$ must be surjective. Given $\Im  (\pi^\#_{cc})^\ast \approxeq (\ker\pi^\#_{cc})^\circ$, and $\pi^\#_{cc}$ is {  skew-symmetric}, $\ker \pi^\#_{cc} = \{0\}$, making $\pi^\#_{cc}$ an isomorphism.  
\end{proof}
{  From this geometric standpoint, it becomes more clear why the bracket defined in equation (3) is valid on the whole phase space, not just $\Sigma$. When the constraint manifold is given by a regular level set of a function, transversality implies level sets neighboring the constraint set are also Poisson transversals with their own induced Poisson tensors. The collection of symplectic leaves on the level sets neighboring the constraint define a symplectic foliation, and therefore a Poisson tensor extending the constraint Poisson tensor, in an open neighborhood around $\Sigma$. Notice that this argument also immediately implies that the constraint functions are Casimirs for the bracket defined in the neighborhood of the constraint. 

When $\Sigma$ is Poisson-Dirac, the rank of $T_p\Sigma^{\perp \pi}$ is possibly non constant, and this argument for extending the Poisson tensor off of the constraint manifold can fail. We return to the LV bracket example to better illustrate this point. Additionally, we show that it is not always possible to extend the Poisson bivector on $\Sigma$ in an open neighborhood around $\Sigma$ when $\Sigma$ is Poisson-Dirac.
\begin{example}
    Going back to the LV bracket in the previous example, let us try to construct a bivector in a neighborhood $U\in\R^4$ of the origin. Recall that the bivector on $N$ is given by 
    \begin{equation}
        \pi_N = \frac{1}{2}uv \partial_u\wedge \partial_v
    \end{equation}
    The submanifold $N$ may be written as the level set of the constraint functions
    \begin{equation}
        N=\{x\in \R^4: f_1(x) = 0, f_2(x) = 0\}
    \end{equation}
    where
    \begin{align}
        f_1(x) = x-z \\
        f_2(x) = y-w
    \end{align}
    A direct computation verifies that these constraints are second class constraints when $x \neq 0$, but become on Casimirs at the origin. As a result, the matrix of the brackets second class constraints becomes singular at the origin. The pseudo inverse $\mathbb{D}$ of the matrix $\mathbb{C}$ proposed in Ref.~\onlinecite{CHANDRE20151} fails to be smooth as the pseudoinverse $\mathbb{D}$ at the origin discontinuously drops to zero. Moreover, the rank of $\pi_N$ drops to from $2$ to $0$ at the origin, so that any candidate extension, $(\tilde{\pi},U)$ of $\tilde{\pi}_N$ will fail to be smooth. See the appendix section and Ref.~\onlinecite{AW_LSoPM} for more information on this. In essence, the smoothness of $\pi$ does not prevent rank dropping events, but it does prevent rank jumping events. In essence, the proof of the Jacobi identity given in Ref.~\onlinecite{CHANDRE20151} relied on a change of coordinates that tacitly assumes that $\pi$ has constant rank which in general does not hold.
\end{example}
}
For our subsequent applications of Poisson-Dirac constraint theory, we now summarize the key elements of the formalism when applied to submanifolds $N$ given as graphs of functions. Suppose phase space is given by the product $X\times Y\ni (x,y)$. (We distinguish notationally between $X,Y$ and $\Sigma,C$ in order to remind readers that neither of the variables $(x,y)$ is required to be comprised of constraints, as is true for $c$.) Let $N$ be the graph of a function $y_0^*:X\rightarrow Y$. 
The tangent space to $N$ at $p\in N$ is
\begin{equation}
    T_pN = \{(p, (\delta x,\delta y): \delta y = Dy_0^*(p)[\delta x]\}.
\end{equation}
The the annihilator space at $p$ is
\begin{equation}
\begin{split}
    (T_pN)^\circ = \{\alpha= (\alpha^X,\alpha^Y)\in X^\ast \times Y^\ast:\\ \alpha^X(\delta x)= -\alpha^Y(Dy^\ast_0(p)[\delta x]) \}.
\end{split}
\end{equation}
The space ${\pi^\#}(T_pN)^\circ$ is given by
\begin{equation}
\begin{split}
    {\pi}^\#(T_pN)^\circ = \{(\delta x, \delta y):\exists \alpha^Y\in Y^\ast \, \, \text{such that}\\ \delta x = (-{\pi}^\#_{xx}(Dy_0^\ast(x))^\ast + {\pi}^\#_{xy})\alpha^Y , \\ \delta y = (-{\pi}^\#_{yx}(Dy_0^\ast(x))^\ast + {\pi}^\#_{yy})\alpha^Y \}.
\end{split}
\end{equation}
Finally, given that $N = \Gamma(y_0^\ast)$ is the graph of a function, we will assume the ansatz for the extension 
\begin{equation} \label{Extension_ansatz}
    \tilde{\alpha} = \alpha + \langle \Theta(x),dy - Dy_0^\ast[dx] \rangle. 
\end{equation}
One notices that $\tilde{\alpha}\mid_{N} = \alpha$ thus making $\tilde{\alpha}$ automatically satisfy one of the conditions needed to be a suitable extension as in lemma \eqref{Extension_condition}. With all relevant background covered, we proceed with the Gauss law constraint. {  Given that we have formulated the theory in a coordinate free, dimension independent manner, there are no \emph{formal} obstructions to employing the theory in infinite dimensions. (Here ``dimension independent" refers to an argument that only uses results that apply to both finite-dimensional and infinite-dimensional spaces.) Importantly, the proof that the bivector defined in equation \eqref{PD_bracket_ext} satisfies the Jacobi identity is coordinate free and dimension independent. For more information on infinite dimensional Poisson geometry, see Ref.~\onlinecite{MR836734, RevModPhys.70.467, marsden1983hamiltonian}. In short, three key changes are made when moving to infinite dimension; we replace functions with functionals, exterior derivatives with functional derivatives, and the operator $\pi^\#$ is more generally a linear operator.}

\section{Ideal Two Fluid Maxwell With Gauss Law Constraint}

We will now consider two applications of Poisson-Dirac constraint theory in the context of the two-fluid Maxwell system. The equations for the asymptotically scaled ideal two fluid Maxwell system with domain being the three torus $\mathbb{T}^3$ are \cite{10.1063/1.4994068}

\begin{gather} 
\hspace{-4em} m_i n_i (\partial_t\boldsymbol{u}_i+\boldsymbol{u}_i\cdot\nabla\boldsymbol{u}_i)=-\nabla \mathsf{p}_i\label{ion_momentum}\\
\hspace{8em}-\frac{1}{\epsilon}Z_iq_e n_i (\boldsymbol{E}+\boldsymbol{u}_i\times\boldsymbol{B})\nonumber\\
\hspace{-4em}m_en_e(\partial_t\boldsymbol{u}_e+\boldsymbol{u}_e\cdot\nabla\boldsymbol{u}_e)=\label{electron_momentum}\\
\hspace{1em}-\nabla\mathsf{p}_e+ \frac{1}{\epsilon}q_e n_e(\boldsymbol{E}+\boldsymbol{u}_e\times\boldsymbol{B})\nonumber\\
\partial_t n_i+\nabla\cdot(n_i \boldsymbol{u}_i)=0\label{ion_continuity}\\
\partial_t n_e+\nabla\cdot(n_e \boldsymbol{u}_e)=0\label{electron_continuity}\\
\nabla\times\boldsymbol{B}=-\frac{1}{\epsilon}\mu_oq_eZ_i n_i\boldsymbol{u}_i + \frac{1}{\epsilon}\mu_oq_e n_e\boldsymbol{u}_e\label{ampere}\\
\hspace{8em}+\epsilon\,\mu_o\epsilon_o\partial_t\boldsymbol{E}\nonumber\\
\nabla\times\boldsymbol{E}=-\partial_t\boldsymbol{B}.\label{faraday}
\end{gather}
Here $n_i$ is the ion number density, $\boldsymbol{u}_\sigma$ is the velocity of species $\sigma$, $\mathsf{p}_\sigma$ is the species-$\sigma$ partial pressure, $\boldsymbol{B}$ is the divergence-free magnetic field, $\boldsymbol{E}$ is the electric field, $m_\sigma$ is the species-$\sigma$ mass, $Z_i$ is the ionic atomic number, $q_e$ is (minus) the elementary unit of charge, and $\mu_0,\epsilon_0$ are the MKS vacuum permeability and permittivity\cite{10.1063/1.4994068}. To proceed, we present the Poisson bracket for this system using the phase space Lagrangian given in Ref.~\onlinecite{10.1063/1.4994068} (see Ref.~\onlinecite{osti_1878066} for more details systematically deriving Poisson brackets from Lagrangians). 

\begin{lemma}
    The two fluid Maxwell dynamics  \eqref{ion_momentum}-\eqref{faraday} comprise a Poisson system with Hamiltonian given in Ref.~\onlinecite{10.1063/1.4994068}.
\end{lemma}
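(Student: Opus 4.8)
The plan is to exhibit the phase space, Poisson bracket, and Hamiltonian explicitly, and then verify by direct substitution that Hamilton's equations $\partial_t F = \{F,H\}$ reproduce \eqref{ion_momentum}--\eqref{faraday}. I would take the phase space $P$ to be the fields $z = (n_i,\boldsymbol u_i,n_e,\boldsymbol u_e,\boldsymbol E,\boldsymbol B)$ on $\mathbb T^3$ subject to $\nabla\cdot\boldsymbol B = 0$, and the Hamiltonian to be the total (asymptotically scaled) energy
\[
H[z] = \sum_{s\in\{i,e\}}\int_{\mathbb T^3}\!\Big(\tfrac12 m_s n_s|\boldsymbol u_s|^2 + n_s U_s(n_s)\Big)\,d^3x + \int_{\mathbb T^3}\!\Big(\tfrac{\epsilon^2\epsilon_0}{2}|\boldsymbol E|^2 + \tfrac{1}{2\mu_0}|\boldsymbol B|^2\Big)\,d^3x,
\]
where $U_s$ is the species internal energy per particle, normalized so that $n_s^2 U_s'(n_s) = \mathsf p_s$, with the $\epsilon$-dependent normalizations fixed as in Ref.~\onlinecite{10.1063/1.4994068}. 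For the bracket I would read it off the phase-space Lagrangian recorded there by the Lagrangian-to-bracket procedure of Ref.~\onlinecite{osti_1878066}: a phase-space action $S[z] = \int\!\big(\langle\vartheta(z),\dot z\rangle - H[z]\big)\,dt$ induces the closed two-form $\Omega = \mathbf d\vartheta$, and on each symplectic leaf (i.e.\ modulo the Casimirs, which here include $\nabla\cdot\boldsymbol B$) the bracket is $\{F,G\} = \Omega(X_F,X_G)$ with $\iota_{X_F}\Omega = \mathbf d F$, the Euler--Lagrange equations of $S$ being exactly $\dot z = X_H$.

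The computational core is then a field-by-field check. First I would record the functional derivatives of $H$: $\delta H/\delta\boldsymbol u_s = m_s n_s\boldsymbol u_s$ (the momentum density), $\delta H/\delta n_s = \tfrac12 m_s|\boldsymbol u_s|^2 + h_s(n_s)$ with $h_s = (n_s U_s)'$ the enthalpy, and $\delta H/\delta\boldsymbol E,\ \delta H/\delta\boldsymbol B$ proportional to $\boldsymbol E$ and $\boldsymbol B$. Substituting these into $\partial_t F = \{F,H\}$ with $F$ ranging over smeared evaluation functionals of each field: the transport terms of the fluid bracket yield the continuity equations \eqref{ion_continuity}--\eqref{electron_continuity}; the curl terms of the electromagnetic block yield Faraday's law \eqref{faraday} and the vacuum part of Amp\`ere's law \eqref{ampere}, the displacement-current factor arising from the $\epsilon$-scaling of $\delta H/\delta\boldsymbol E$; and the fluid--field cross terms supply both the $\tfrac1\epsilon$ species-current sources in \eqref{ampere} and the Lorentz forces $\pm\tfrac1\epsilon q_e n_s(\boldsymbol E + \boldsymbol u_s\times\boldsymbol B)$ in \eqref{ion_momentum}--\eqref{electron_momentum}. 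For the momentum equations I would convert the bracket's natural output, which appears in vorticity/Bernoulli form, into advective form using $\boldsymbol u_s\cdot\nabla\boldsymbol u_s = (\nabla\times\boldsymbol u_s)\times\boldsymbol u_s + \nabla(\tfrac12|\boldsymbol u_s|^2)$, and turn $n_s\nabla h_s$ into $\nabla\mathsf p_s$ by the thermodynamic relation, tracking all signs and charge factors.

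For the Jacobi identity I would not compute directly but appeal to the Lagrangian origin of the bracket: since $\Omega = \mathbf d\vartheta$ is exact it is closed, so the induced bracket on each leaf automatically satisfies Jacobi; equivalently, the bracket is the pushforward of a canonical bracket under the Euler--Poincar\'e/Clebsch map underlying the phase-space Lagrangian, and pushforwards of Poisson brackets along surjective submersions are Poisson. I expect the main obstacle to be bookkeeping rather than anything conceptual --- getting the $\epsilon$-scaled displacement current, the $\tfrac1\epsilon$ coupling coefficients, the opposite charge signs of the two species, and the Casimir character of $\nabla\cdot\boldsymbol B$ (needed so the flow preserves the constraint) all to match \eqref{ion_momentum}--\eqref{ampere} exactly --- while the fluid sector is the standard multi-species ideal-fluid bracket and introduces no new difficulty beyond the enthalpy-to-pressure conversion.
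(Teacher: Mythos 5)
Your proposal is correct and follows essentially the same route as the paper: the paper likewise obtains the bracket by varying the phase-space Lagrangian of Ref.~\onlinecite{10.1063/1.4994068} (via the procedure of Ref.~\onlinecite{osti_1878066}), writes it down explicitly, and defers the Jacobi identity to the literature, whereas you supply the Jacobi argument yourself through the exactness of $\Omega=\mathbf{d}\vartheta$ and spell out the field-by-field verification that the paper leaves implicit. The only caution is the $\epsilon$-scaling of the electromagnetic energy (the bracket's $\tfrac{1}{\epsilon\epsilon_0}$ and $\tfrac{q_\sigma}{m_\sigma\epsilon^2\epsilon_0}$ coefficients force $\delta H/\delta\boldsymbol{E}\propto\epsilon\epsilon_0\boldsymbol{E}$ rather than $\epsilon^2\epsilon_0\boldsymbol{E}$ if \eqref{ion_momentum}--\eqref{faraday} are to emerge exactly), which is precisely the bookkeeping you flag as the main hazard.
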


\begin{proof}
    This result is well known (see Ref.~\onlinecite{PhysRevA.25.2437,marsden1983hamiltonian}, for alternate derivations). By considering a variation of the above action in Ref.~\onlinecite{10.1063/1.4994068}, we arrive at the two fluid Poisson bracket 
    \begin{equation}
\begin{split}
    \{F,G\}(\boldsymbol{E},\boldsymbol{B},n_i,n_e,\boldsymbol{u}_e,\boldsymbol{u}_i)= \\\sum_{\sigma \in \{i,e\}}\int  -m_\sigma^{-1}(F_{n_\sigma} \nabla \cdot G_{\boldsymbol{u}_{\sigma}} + F_{\boldsymbol{u}_{\sigma}} \cdot \nabla G_{n_\sigma})\,d^3\boldsymbol{x} \\+ \int \frac{\nabla \times \boldsymbol{u}_{\sigma}}{m_\sigma n_\sigma }\cdot (F_{\boldsymbol{u}_{\sigma}}\times G_{\boldsymbol{u}_{\sigma}})\,d^3\boldsymbol{x} \\ \int \frac{1}{\epsilon \epsilon_o}(F_{\boldsymbol{E}} \cdot (\nabla\times{G_{\boldsymbol{B}}}) - G_{\boldsymbol{E}} \cdot (\nabla\times{F_{\boldsymbol{B}}}))\,d^3\boldsymbol{x}  \\+ \int  \frac{q_\sigma}{m_\sigma \epsilon^2 \epsilon_o}(F_{\boldsymbol{u}_{\sigma}}\cdot G_{\boldsymbol{E}} - F_{\boldsymbol{E}}\cdot G_{\boldsymbol{u}_{\sigma}})\, d^3\boldsymbol{x} \\ + \int \frac{q_\sigma \boldsymbol{B}}{\epsilon(m_\sigma)^2 n_\sigma}  \cdot (F_{\boldsymbol{u}_{\sigma}}\cross G_{\boldsymbol{u}_{\sigma}})\,d^3\boldsymbol{x}
\end{split}
\end{equation}
where $q_i = -Z_iq_e$ and $q_e$ are the charges of the ions and electrons respectively. Additionally, we use the compact functional derivative notation of Morrison\cite{RevModPhys.70.467} $F_{\chi_i} = \fdv{F}{\chi_i}$, where $\chi_i\in (\boldsymbol{E},\boldsymbol{B},n_i,n_e,\boldsymbol{u}_e,\boldsymbol{u}_i) := M$ is some field variable. This defines the Poisson bracket, making the two-fluid Maxwell dynamics Poisson as claimed. {For a proof of the Jacobi identity, see Ref.~\onlinecite{PhysRevA.25.2437}.}
\end{proof}
This bracket is defined on a space that includes $n_e$ as an independent variable. We can obtain a an alternative, smaller formulation that does not involve $n_e$ by restricting to the submanifold defined by the Gauss constraint $\epsilon\,\epsilon_0\,\nabla\cdot \bm{E} = q_e\,n_e + q_i\,n_i$. We perform this reduction in the next Section using Poisson-Dirac constraint theory.

\section{Gauss law Casimir is a Poisson-Dirac Submanifold}

\begin{lemma}
    The subset $N$ of $(\bm{E},\bm{B},n_i,n_e,\bm{u}_e,\bm{u}_i)$-space defined by the Gauss constraint is a Poisson-Dirac submanifold.
\end{lemma}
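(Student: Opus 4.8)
The plan is to show that the Gauss constraint surface $N$ is the graph of a function expressing $n_e$ in terms of the remaining variables, and then verify the Poisson-Dirac conditions \textbf{PD.1} and \textbf{PD.2} directly. First I would solve the Gauss constraint $\epsilon\epsilon_0\nabla\cdot\bm{E} = q_e n_e + q_i n_i$ for $n_e$, writing $n_e = n_e^\ast(\bm{E},n_i) := (\epsilon\epsilon_0\nabla\cdot\bm{E} - q_i n_i)/q_e$. This exhibits $N$ as $\Gamma(n_e^\ast)$, so the machinery summarized at the end of Section 2 applies with $X = (\bm{E},\bm{B},n_i,\bm{u}_e,\bm{u}_i)$-space and $Y = n_e$-space, and the tangent space is $T_pN = \{(\delta\bm{E},\delta\bm{B},\delta n_i,\delta n_e,\delta\bm{u}_e,\delta\bm{u}_i): \delta n_e = (\epsilon\epsilon_0\nabla\cdot\delta\bm{E} - q_i\delta n_i)/q_e\}$.

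Next I would compute $(T_pN)^{\pi\perp} = \pi^\#((T_pN)^\circ)$. The annihilator $(T_pN)^\circ$ consists of one-forms with $n_e$-component $\alpha^{n_e}$ arbitrary and the $(\bm{E},n_i)$-components tied to it via the transpose of $Dn_e^\ast$; all other components vanish. Applying the two-fluid anchor map to such a covector, the key observation is that $n_e$ enters the bracket only through the terms $-m_e^{-1}(F_{n_e}\nabla\cdot G_{\bm u_e} + F_{\bm u_e}\cdot\nabla G_{n_e})$ and the $n_e$-dependent coefficients $1/(m_e n_e)$. Crucially, a pure $\alpha^{n_e}$ covector is pushed by $\pi^\#$ into a vector whose only nonzero component lies in the $\bm{u}_e$ slot (it is $\propto m_e^{-1}\nabla\alpha^{n_e}$), while the induced $(\bm{E},n_i)$-components of the covector contribute, through the Maxwell and coupling terms, vectors in the $\bm{B}$, $\bm{u}_e$, $\bm{u}_i$, and $\bm{E}$ slots. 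To establish \textbf{PD.1} I must show that the only element of $(T_pN)^{\pi\perp}$ that also satisfies the tangency constraint $\delta n_e = (\epsilon\epsilon_0\nabla\cdot\delta\bm{E} - q_i\delta n_i)/q_e$ is zero; I expect the $\delta\bm{u}_e$-component, which carries an explicit $\nabla\alpha^{n_e}$ term not reproducible from the other contributions, to force $\alpha^{n_e}$ constant, and then careful bookkeeping of the remaining slots forces the whole covector into $(T_pN)^\circ\cap\ker\pi^\#$, giving $\delta = 0$.

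For \textbf{PD.2} (smoothness of the induced bivector) I would use Lemma \ref{Extension_condition}: given $\alpha\in\Omega^1(N)$, use the graph ansatz \eqref{Extension_ansatz}, $\tilde\alpha = \alpha + \langle\Theta, dn_e - Dn_e^\ast[d(\bm{E},n_i)]\rangle$, and solve the analog of \eqref{PD_eqns} for the multiplier $\Theta$ (here a scalar field conjugate to $n_e$). Because $\pi^\#_{cc}$ in this problem is essentially the zero operator in the $n_e$-$n_e$ block (there is no $\{n_e,n_e\}$ term) while $\pi^\#_{c\sigma}$ acting on $df$ lands in the range of the continuity-type operator $\nabla\cdot(\,\cdot/m_e)$, I expect the solvability condition to reduce to a statement that the relevant flux is divergence-free, which holds identically on $N$; the resulting $\Theta$ can be chosen smoothly, so the induced bivector is smooth. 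The main obstacle will be Step 2: carefully tracking how the $n_e$-annihilator covector propagates through every term of the two-fluid bracket and confirming that no cancellation allows a nonzero tangent vector in the image — in other words, verifying the kernel-nesting condition $\ker\{c,c\}\subset\ker\{\sigma,c\}$ concretely for this infinite-dimensional bracket, where $\{c,c\}$ genuinely degenerates (it is essentially zero, since $n_e$ is close to a Casimir) and one must check that $\{\sigma,c\}$ annihilates the same directions.
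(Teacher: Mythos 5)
Your overall strategy---exhibit $N$ as the graph of $n_e^\ast(\bm{E},n_i)$, parametrize $(T_pN)^\circ$ by the single free component $\alpha^{n_e}$ (with induced $n_i$- and $\bm{E}$-components $-Z_i\alpha^{n_e}$ and $\epsilon^2\epsilon_0 q_e^{-1}\nabla\alpha^{n_e}$), push it through the two-fluid anchor map, and intersect with $T_pN$---is exactly the paper's route. But the heart of your Step 2 rests on a prediction that is false, and on a mechanism that could not close the argument even in principle. You expect the $m_e^{-1}\nabla\alpha^{n_e}$ term in the $\bm{u}_e$ slot to be ``not reproducible from the other contributions'' and to force $\alpha^{n_e}$ to be constant via the tangency constraint. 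In fact, the induced $\bm{E}$-component feeds through the coupling term $\frac{q_e}{\epsilon^2\epsilon_0 m_e}F_{\bm{E}}$ and cancels $-m_e^{-1}\nabla\alpha^{n_e}$ identically; the same cancellation occurs in the $\bm{u}_i$ slot between $-\frac{Z_iq_e}{\epsilon^2\epsilon_0 m_i}F_{\bm{E}}$ and $-m_i^{-1}\nabla F_{n_i}$ with $F_{n_i}=-Z_i\alpha^{n_e}$, and the $\bm{B}$ slot vanishes because $\nabla\times\nabla\alpha^{n_e}=0$. Hence $\pi^\#$ annihilates the entire annihilator: $(T_pN)^{\pi\perp}=\{0\}$ (the Gauss constraint generates Casimirs, as the paper's section title signals), which is precisely what the paper computes and which makes \textbf{PD.1} immediate.

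Moreover, your fallback---using the tangency condition $\delta n_e = Z_i\delta n_i+\epsilon^2\epsilon_0 q_e^{-1}\nabla\cdot\delta\bm{E}$ to eliminate a putative residual in the $\bm{u}_e$ slot---cannot work: that condition constrains only the $\delta n_e$, $\delta n_i$, $\delta\bm{E}$ components and says nothing about $\delta\bm{u}_e$, $\delta\bm{u}_i$, $\delta\bm{B}$. So if the cancellation you disclaim did not occur, the image vector (nonzero only in velocity or magnetic slots) would automatically lie in $T_pN$, producing a nonzero element of $T_pN\cap(T_pN)^{\pi\perp}$ and defeating \textbf{PD.1} rather than rescuing it. The lemma holds only because of the exact cancellation, so the proof must actually carry out that computation instead of arguing around it. Once $(T_pN)^{\pi\perp}=\{0\}$ is established, your \textbf{PD.2} step also simplifies: the tangency requirement on $\pi^\#(\tilde{\alpha})$ is vacuous, any multiplier $\Theta$ in the graph ansatz is admissible (the paper simply sets $\Theta=0$), and smoothness of the induced bracket is immediate---no divergence-free solvability condition is needed.
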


\begin{proof}
    The Poisson bracket in equation (251) induces a bundle map $\pi^{\#}:M^\ast \to M$ given by
    \begin{widetext}
    \begin{equation}
        \pi^{\#}(\delta F)=X_{\delta F} = \begin{pmatrix}
            -\frac{1}{\epsilon \epsilon_o}\nabla\times{F_{\boldsymbol{E}}} \\
            \frac{1}{\epsilon \epsilon_o}\nabla\times{F_{\boldsymbol{B}}} - \frac{q_e}{\epsilon^2 \epsilon_o m_e}F_{\boldsymbol{u}_e} +  \frac{Z_iq_e}{\epsilon^2 \epsilon_o m_i}F_{\boldsymbol{u}_i} \\
            -\frac{Z_iq_e}{\epsilon m_i^2n_i}(F_{\boldsymbol{u}_i}\times B) -\frac{Z_iq_e}{\epsilon^2 \epsilon_o m_i}F_{\boldsymbol{E}}-\frac{1}{m_i}\nabla F_{n_i} + \frac{1}{m_i n_i} F_{\boldsymbol{u}_{i}} \times (\nabla \times \boldsymbol{u}_{i}) \\ 
            \frac{q_e}{\epsilon m_e^2n_e}(F_{\boldsymbol{u}_e}\times B) +\frac{q_e}{\epsilon^2 \epsilon_o m_e}F_{\boldsymbol{E}}-\frac{1}{m_e}\nabla F_{n_e} + \frac{1}{m_e n_e} F_{\boldsymbol{u}_{e}} \times (\nabla \times \boldsymbol{u}_{e}) \\
            - m_i^{-1}\nabla\cdot{F_{\boldsymbol{u}_i}} \\
            - m_e^{-1}\nabla\cdot{F_{\boldsymbol{u}_e}}
        \end{pmatrix}= \begin{pmatrix}
            \delta {\boldsymbol{B}}\\
            \delta {\boldsymbol{E}}\\
            \delta {\boldsymbol{u}_i}\\
            \delta {\boldsymbol{u}_e}\\
            \delta {n_i}\\
            \delta {n_e}
        \end{pmatrix}
    \end{equation}
    \end{widetext}
    where $\delta F$ is an arbitrary linear functional.
    Notice that $T_pN = \{(\delta \boldsymbol{B}, \delta \boldsymbol{E}, \delta \boldsymbol{u}_{i}, \delta \boldsymbol{u}_{e}, \delta n_i, \delta n_e): \delta n_e = Z_i \delta n_i + \epsilon^2 \epsilon_0q_e^{-1}\nabla\cdot\delta\boldsymbol{E}\}$ spans all elements of the two fluid space other than the electron number density. However, not all scalar functions are the divergence of vector valued functions, so that $N$ is not transverse to the leaf making the DCT method inapplicable. In order to prove $T_pN\cap (T_pN)^{\pi\perp}= \{0\}$, suppose that there exists a non-zero vector $\delta F\in T_pN\cap (T_pN)^{\pi\perp}$ contained within the intersection. We characterise the $\pi$-orthogonal of $T_pN$ using formulae from section 1:
    \begin{equation}
    \begin{split}
        \delta F &= \langle F_{n_e}, \delta n_e\rangle - \langle F_{n_e},Z_i \delta n_i + \epsilon ^2 \epsilon_oq_e^{-1} \nabla \cdot \delta \boldsymbol{E}\rangle   \\&=\langle F_{n_e}, \delta n_e\rangle - \langle F_{n_e},Z_i \delta n_i \rangle + \langle \nabla F_{n_e},\epsilon ^2 \epsilon_oq_e^{-1} \delta \boldsymbol{E}\rangle
    \end{split}
    \end{equation}
    where $\langle\cdot ,\cdot \rangle: M\times M^\ast\to \R$ is the natural pairing as defined in the previous section. Applying this to the bundle map, we get the zero vector:
    \begin{align}
        \begin{pmatrix}
            0\\
            0\\
            \frac{-Z_iq_e}{\epsilon^2 \epsilon_o m_i}(\epsilon ^2 \epsilon_oq_e^{-1} \nabla F_{n_e})+m_i^{-1}Z_i\nabla F_{n_e}\\
            +\frac{q_e}{\epsilon^2\epsilon_o  m_e}(\epsilon ^2 \epsilon_oq_e^{-1}\nabla F_{n_e}) - m_e^{-1}\nabla F_{n_e}\\
            0\\
            0
        \end{pmatrix}=\begin{pmatrix}
            0\\
            0\\
            0\\
            0\\
            0\\
            0
        \end{pmatrix},
    \end{align}
    which implies that $(T_pN)^{\pi\perp} = \{0\}$, making $N$ a Poisson-Dirac submanifold.
    \end{proof}
    
Now that $N$ is a Poisson Dirac submanifold, we find the bracket on $N$. Requiring that $\pi^\#(\tilde{F})$ be tangent to $N$ is trivial since the $\pi$-orthogonal contains only the zero vector. The relevant extensions may be found in slightly different ways. Using eqn.\eqref{Extension_ansatz} in the previous section, we write the ansatz for an extension, $\delta \tilde{F}\in TM^\ast$ of a one form $\delta F\in T^\ast N$
\begin{equation}
\begin{split}
    \delta\tilde{F} = \int (F_{n_i} - Z_i\Theta_{n_e})\delta n_id^3\boldsymbol{x} + \int F_{\boldsymbol{B}} \cdot \delta \boldsymbol{B} \,d^3\boldsymbol{x} \\+ \int(F_{\boldsymbol{E}}+\epsilon^2\epsilon_o q_e^{-1}\nabla\Theta_{n_e})\cdot \delta \boldsymbol{E}\, d^3\boldsymbol{x} + \int\Theta_{n_e}\cdot \delta{n_e}\, d^3\boldsymbol{x}\\ + \int F_{\boldsymbol{u}_i}\cdot \delta{\boldsymbol{u}_i}\,d^3\boldsymbol{x} + \int F_{\boldsymbol{u}_e}\cdot \delta{\boldsymbol{u}_e}\,d^3\boldsymbol{x}.
\end{split}
\end{equation}
Now, we apply the above extension to $\pi$:
Notice that all $\Theta$ terms vanish from the above vector field, so that any choice for $\Theta_{n_e}$ is valid. Thus, we may set $\Theta = 0$ resulting in the Poisson tensor $\pi_{N}$ on $N$:
\begin{widetext}
    \begin{equation}
    \pi^\#_{N}(\delta F)=X_{\delta F} = \begin{pmatrix}
            -\frac{1}{\epsilon \epsilon_o}\nabla\times{F_{\boldsymbol{E}}} \\
            \frac{1}{\epsilon \epsilon_o}\nabla\times{F_{\boldsymbol{B}}} - \frac{q_e}{\epsilon^2 \epsilon_o m_e}F_{\boldsymbol{u}_e} +  \frac{Z_iq_e}{\epsilon^2 \epsilon_o m_i}F_{\boldsymbol{u}_i} \\
            -\frac{Z_iq_e}{\epsilon m_i^2n_i}(F_{\boldsymbol{u}_i}\times B) -\frac{Z_iq_e}{\epsilon^2 \epsilon_o m_i}F_{\boldsymbol{E}}-m_i^{-1}\nabla F_{n_i} + \frac{1}{m_i n_i} F_{\boldsymbol{u}_{i}} \times (\nabla \times \boldsymbol{u}_{i}) \\ 
            \frac{q_e}{\epsilon m_e^2n_e}(F_{\boldsymbol{u}_e}\times B) +\frac{q_e}{\epsilon^2 \epsilon_o m_e}F_{\boldsymbol{E}} + \frac{1}{m_e n_e} F_{\boldsymbol{u}_{e}} \times (\nabla \times \boldsymbol{u}_{e}) \\
            - m_i^{-1}\nabla\cdot{F_{\boldsymbol{u}_i}} \\
        \end{pmatrix}
\end{equation}
\end{widetext}
By writing $\langle \delta F, \pi^\#_{N}(\delta G) \rangle$, we obtain the Poisson bracket on $N$: 
\begin{align}
    \begin{split}
    \{F,G\}_{N}  = \sum_{\sigma \in \{i,e\}}\int  -m_i^{-1}(F_{n_i} \nabla \cdot G_{\boldsymbol{u}_{i}} + F_{\boldsymbol{u}_{i}} \cdot \nabla G_{n_i}) \\ + \frac{\nabla \times \boldsymbol{u}_{\sigma}}{m_\sigma n_\sigma }\cdot (F_{\boldsymbol{u}_{\sigma}}\times G_{\boldsymbol{u}_{\sigma}})\,d^3\boldsymbol{x} \\ \int \frac{1}{\epsilon \epsilon_o}(F_{\boldsymbol{E}} \cdot (\nabla\times{G_{\boldsymbol{B}}}) - G_{\boldsymbol{E}} \cdot (\nabla\times{F_{\boldsymbol{B}}}))\,d^3\boldsymbol{x}  \\+ \int  \frac{q_\sigma}{m_\sigma \epsilon^2 \epsilon_o}(F_{\boldsymbol{u}_{\sigma}}\cdot G_{\boldsymbol{E}} - F_{\boldsymbol{E}}\cdot G_{\boldsymbol{u}_{\sigma}})\, d^3\boldsymbol{x} \\ + \int \frac{q_\sigma \boldsymbol{B}}{\epsilon(m_\sigma)^2 n_\sigma}  \cdot (F_{\boldsymbol{u}_{\sigma}}\times G_{\boldsymbol{u}_{\sigma}})\,d^3\boldsymbol{x}
    \end{split}
\end{align}
where $n_e(n_i, \boldsymbol{E}) = Z_in_i + \epsilon^2 \epsilon_oq_e^{-1}\nabla \cdot \boldsymbol{E}$ is understood as a function of $n_i$ and $\boldsymbol{E}$. A quick computation reveals that the above bracket recovers equations \eqref{ion_momentum} - \eqref{faraday} using the Hamiltonian given in Ref.~\onlinecite{10.1063/1.4994068}.

    \section{Two Fluid Maxwell to MHD}

    We now give our final example of Poisson-Dirac constraint theory. We write the two-fluid-Maxwell phase space as the product $M =X\times Y$, where $X\ni (n_i,\bm{u}_i,\bm{B})$ and $Y\ni (\bm{u}_e,\bm{E})$. The constraint set $N\subset X\times Y$ will be given by 
    \begin{equation}
    N = \{(n_i,\boldsymbol{u}_i,\boldsymbol{B},\boldsymbol{u}_e,\boldsymbol{E}):\boldsymbol{u}_e=\boldsymbol{u}_i, \quad \boldsymbol{E} = -\boldsymbol{u}_i\times \boldsymbol{B}\},
    \end{equation}
    as in the slow manifold analysis from Ref.\,\onlinecite{Burby_2020}.
    We proceed by demonstrating that $N$ is a transversal with respect to $(\pi, M)$. 
    \begin{thm}
        $N$ is a Poisson transversal with respect to $(\pi, M)$. 
    \end{thm}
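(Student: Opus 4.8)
The plan is to exploit that $N$ is the graph of the ideal--Ohm's--law map $y_0^\ast\colon X\to Y$, $(n_i,\bm{u}_i,\bm{B})\mapsto(\bm{u}_e,\bm{E})=(\bm{u}_i,-\bm{u}_i\times\bm{B})$, and to reduce the two transversality conditions to the invertibility of a single constraint bracket. First I would introduce the constraint functions $c=(c_1,c_2)$ with $c_1=\bm{u}_e-\bm{u}_i$ and $c_2=\bm{E}+\bm{u}_i\times\bm{B}$, together with $\sigma=(n_i,\bm{u}_i,\bm{B})$; the passage $(n_i,\bm{u}_i,\bm{B},\bm{u}_e,\bm{E})\leftrightarrow(\sigma,c)$ is a diffeomorphism and $N=\{c=0\}$, so by the Proposition identifying standard DCT with Poisson transversals it is enough to show that the operator matrix $\{c,c\}$ built from the reduced bracket $\pi$ of $M$ (the Gauss-reduced two-fluid--Maxwell tensor $\pi_N$ of the previous Section) is invertible. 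Staying coordinate-free one may instead verify \textbf{PT.1} and \textbf{PT.2} directly from the graph formulae for $T_pN$, $(T_pN)^{\circ}$ and $\pi^{\#}(T_pN)^{\circ}$ recorded above, since $\pi^{\#}((T_pN)^{\circ})$ is the span of the Hamiltonian vector fields of the components of $c$ and the relation $\lambda_1 X_{c_1}+\lambda_2 X_{c_2}\in T_pN$ is equivalent to $\{c,c\}\lambda=0$; the linear algebra is the same.

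The second step is to compute $\{c,c\}$ by feeding $dc_1$ and $dc_2$ through the anchor map $\pi_N^{\#}$ displayed in the previous Section, keeping in mind that there $n_e$ is the composite $n_e(n_i,\bm{E})=Z_i n_i+\epsilon^2\epsilon_0 q_e^{-1}\nabla\cdot\bm{E}$ and that neither $c_1$ nor $c_2$ involves $n_i$, so no gradient-of-a-density term enters. One obtains the block form $\{c,c\}=\bigl(\begin{smallmatrix}\{c_1,c_1\}&\{c_1,c_2\}\\\{c_2,c_1\}&\{c_2,c_2\}\end{smallmatrix}\bigr)$, in which $\{c_1,c_1\}$ is a bounded skew operator of cross-product type, $\{c_2,c_2\}$ is skew and inherits the unbounded Maxwell coupling $\tfrac{1}{\epsilon\epsilon_0}\nabla\times(\,\cdot\,)$ dressed by $\bm{u}_i$, and --- crucially --- the off-diagonal block is $\{c_1,c_2\}=\kappa\,\mathrm{Id}+(\text{bounded, lower order})$ with $\kappa=\tfrac{q_e}{\epsilon^2\epsilon_0}\bigl(m_e^{-1}+Z_i m_i^{-1}\bigr)\neq 0$, the $\kappa\,\mathrm{Id}$ term being exactly the fast $O(1/\epsilon^2)$ symplectic pairing between the species velocities and $\bm{E}$ (using $q_i=-Z_i q_e$).

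The third step is to conclude that $\{c,c\}$ is invertible, whence $N$ is a Poisson transversal by the cited Proposition (which then delivers both \textbf{PT.1} and \textbf{PT.2}). Since $\{c,c\}$ is structurally skew-adjoint and dominated by the skew-adjoint, manifestly invertible block $S_0=\bigl(\begin{smallmatrix}0&\kappa\,\mathrm{Id}\\-\kappa\,\mathrm{Id}&0\end{smallmatrix}\bigr)$ with $\|S_0^{-1}\|=|\kappa|^{-1}$, I would write $\{c,c\}=S_0+\mathcal{E}$ and control $\mathcal{E}$ relative to $|\kappa|$ in the asymptotically scaled regime $\epsilon\ll 1$, using the skew-adjointness of the unbounded curl piece to prevent it from spoiling a lower bound of the form $\|\{c,c\}\psi\|\ge(|\kappa|-C)\|\psi\|$; the same bound for the adjoint then promotes injectivity with closed range to invertibility in the relevant Sobolev setting on $\mathbb{T}^3$. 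A direct check of \textbf{PT.1}/\textbf{PT.2} runs in parallel: a vector in $T_pN\cap(T_pN)^{\pi\perp}$, or equivalently a covector in $(T_pN)^{\circ}\cap\ker\pi^{\#}$, is forced to vanish once the $\kappa$-term is inverted, the only nontrivial step being that the $\bm E$-consistency equation feeds back through the Maxwell curl.

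I expect the main obstacle to be precisely this last point: making the invertibility of $\{c,c\}$ rigorous in infinite dimensions, since the block $\{c_2,c_2\}$ carries the unbounded and highly degenerate curl operator of Maxwell's equations, so ``invertible'' must be read in an appropriate function-space sense and the estimate has to combine the structural skew-adjointness of the constraint bracket with the smallness of the curl coefficient relative to $\kappa$. A subsidiary bookkeeping hazard is that $\pi_N$ is already the Gauss-reduced tensor, so the composite density $n_e(n_i,\bm{E})$ and its derivatives must be propagated consistently through every block of the computation.
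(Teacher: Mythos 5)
Your route is genuinely different from the paper's: you reduce transversality to invertibility of the constraint bracket $\{c,c\}$ for $c=(\bm{u}_e-\bm{u}_i,\ \bm{E}+\bm{u}_i\times\bm{B})$ and attempt an honest operator-theoretic inversion, whereas the paper verifies \textbf{PT.2} by exhibiting an explicit one-form $\theta$ hitting an arbitrary $(\bm{u}_e,\bm{E})$, and then --- explicitly conceding that the statement is ``very difficult to prove for arbitrary $\epsilon$'' --- establishes \textbf{PT.1} only \emph{formally}, by expanding $\epsilon^2\pi_\epsilon=\sum_i\epsilon^i\pi_i$ and a covector $\delta F_\epsilon=\sum_i\epsilon^i\delta F_i\in(T_pN)^\circ$ and showing inductively that every $\delta F_i$ vanishes, order by order. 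Your key structural observation is correct and is the same mechanism the paper exploits at leading order: the $O(1/\epsilon^2)$ velocity--field coupling produces the off-diagonal block $\kappa\,\mathrm{Id}$ with $\kappa=\tfrac{q_e}{\epsilon^2\epsilon_0}(m_e^{-1}+Z_im_i^{-1})$, and this is what kills $\delta F_0$ in the paper's base case. If you downgraded your claim to a formal order-by-order statement, your computation would essentially reproduce the paper's argument in constraint-bracket language.

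As an actual proof of invertibility, however, there is a genuine gap at the decisive step. First, the estimate $\|\{c,c\}\psi\|\ge(|\kappa|-C)\|\psi\|$ cannot be obtained as stated: $\mathcal{E}$ contains the unbounded operator $\tfrac{1}{\epsilon\epsilon_0}\nabla\times(\bm{u}_i\times\cdot)$ in the $\{c_2,c_2\}$ block, so no finite $C$ bounds $\|\mathcal{E}\|$, and the skew-adjointness you invoke is of no direct help because the \emph{entire} matrix $\{c,c\}$ is skew, so $\langle\{c,c\}\psi,\psi\rangle=0$ and the usual coercivity pairing yields nothing; for any fixed $\epsilon>0$ the curl term remains unbounded on $L^2(\mathbb{T}^3)$, which is precisely the singular-perturbation obstruction the paper is dodging by working formally. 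Second, your accounting of orders is incomplete: besides the curl, $\{c_2,c_2\}$ carries a \emph{bounded} $O(1/\epsilon^2)$ piece proportional to $\bm{B}\times(\cdot)$, coming from the $\bm{u}_i$--$\bm{E}$ coupling composed with $\delta c_2/\delta\bm{u}_i=\bm{B}\times(\cdot)$; this term is of the same order in $\epsilon$ as $\kappa$, so ``$\mathcal{E}$ small relative to $S_0$ in the regime $\epsilon\ll1$'' is false on scaling grounds alone --- one needs either a smallness hypothesis on $Z_i\|\bm{B}\|_\infty m_e/m_i$ or a Schur-complement argument with respect to the invertible off-diagonal block (where the diagonal blocks enter only through $AB^{-1}D$ and are suppressed), together with a genuine functional-analytic treatment of the residual curl. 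Until that analysis is supplied, your proposal proves no more than the paper's formal statement, while claiming more.
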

    \begin{proof}
        We proceed by computing $T_pN$ and proving the transversality condition. $T_pN$ is given by 
        \begin{equation}
        \begin{split}
            T_pN = \{(\delta x, \delta y )\in M: \delta \boldsymbol{u}_e = \delta \boldsymbol{u}_i, \\\delta \boldsymbol{E} = - \delta \boldsymbol{u}_i\times \boldsymbol{B} - \boldsymbol{u}_i\times \delta \boldsymbol{B}\}.
        \end{split}
        \end{equation}
        $T_pN$ spans the tuple $(n_i, B, u_i)$ so we only need check that $T_pL_p$ spans the fast variables. To this end, let $(\boldsymbol{u_e}, \boldsymbol{E})\in Y$ be arbitrary. 

        We may pick the one form $\theta = \int \frac{\epsilon^2 \epsilon_o m_i}{Z_iq_e}(\boldsymbol{E}\cdot \delta \boldsymbol{u_i}) + \frac{\epsilon^2 \epsilon_o m_e}{q_e}(\boldsymbol{u_e}\cdot \delta \boldsymbol{E})\,d^3\boldsymbol{x}$ so that $\pi^\#(\theta) = (\boldsymbol{u}_e, \boldsymbol{E})$ so that the transversality condition is satisfied. It is very difficult to prove that $N$ is a Poisson transversal for arbitrary $\epsilon$. However, we can prove that the condition holds {  formally in the the limit $\epsilon \to 0$}. We proceed by expanding $\pi$ perturbatively
        \begin{equation}
            \epsilon^2\pi_\epsilon = \sum_{i=0}^2 \epsilon^i\pi_i
        \end{equation}
        where
        \begin{equation}
            \pi_0(\delta F) = \begin{pmatrix}
            0 \\
            - \frac{q_e}{\epsilon_o m_e}F_{\boldsymbol{u}_e} +  \frac{Z_iq_e}{\epsilon_o m_i}F_{\boldsymbol{u}_i} \\
            -\frac{Z_iq_e}{ \epsilon_o m_i}F_{\boldsymbol{E}}  \\ 
            \frac{q_e}{ \epsilon_o m_e}F_{\boldsymbol{E}}  \\
            0\\
        \end{pmatrix},
        \end{equation}
        \begin{equation}
            \pi_1(\delta F) = \begin{pmatrix}
            -\frac{1}{ \epsilon_o}\nabla\times{F_{\boldsymbol{E}}} \\
            \frac{1}{ \epsilon_o}\nabla\times{F_{\boldsymbol{B}}}  \\
            \frac{-Z_iq_e}{m_i^2n_i}(F_{\boldsymbol{u}_i}\times \boldsymbol{B}) \\ 
            \frac{q_e}{m_e^2n_e}(F_{\boldsymbol{u}_e}\times \boldsymbol{B}) \\
            0
            \end{pmatrix}
        \end{equation}
        and,
        \begin{equation}
            \pi_2(\delta F) = \begin{pmatrix}
            0 \\
            0\\
            -m_i^{-1}\nabla F_{n_i} + \frac{1}{m_i n_i} F_{\boldsymbol{u}_{i}} \times (\nabla \times \boldsymbol{u}_{i}) \\ 
             \frac{1}{m_e n_e} F_{\boldsymbol{u}_{e}} \times (\nabla \times \boldsymbol{u}_{e}) \\
            - m_i^{-1}\nabla\cdot{F_{\boldsymbol{u}_i}} \\
        \end{pmatrix}.
        \end{equation}
        Moving on to the intersection condition, suppose towards a contradiction that there is a vector $\nu\in T_pN\cap (T_pN)^{\perp\pi}$. We proceed by characterising the annihilator space $(T_pN)^{\perp\pi}$. This is achieved by applying an arbitrary element of the annihilator space to $\pi^\#$: 
        \begin{equation}
        \begin{split}
            \delta F = \langle \boldsymbol{B}\times F_{\boldsymbol{E}} -F_{\boldsymbol{u}_e}, \delta \boldsymbol{u}_i\rangle + \langle -\boldsymbol{u}_i\times F_{\boldsymbol{E}}, \delta \boldsymbol{B} \rangle \\+ \langle F_{\boldsymbol{E}}, \delta \boldsymbol{E}\rangle +  \langle F_{\boldsymbol{u}_e}, \delta \boldsymbol{u}_e\rangle 
        \end{split}
        \end{equation}
    Checking the intersection condition perturbatively, we write $\delta F_\epsilon \in (T_pN)^\circ$ where $\delta F_\epsilon = \sum_i \epsilon^i\delta F_i$, apply $\pi_\epsilon$, and apply the condition $T_pN\cap (T_pN)^{\perp \pi}$. The equations order by order are
    \begin{equation}
    \begin{split}
        \pi_\epsilon(\delta F_\epsilon) = \pi_0(\delta F_0) + \epsilon (\pi_1(\delta F_0) + \pi_0(\delta F_1)) + \epsilon^2(\pi_2(\delta F_0) \\+ \pi_0(\delta F_2) + \pi_1(\delta F_1)) + \order{\epsilon^3}
    \end{split}
    \end{equation}
    and
    \begin{equation} 
            \pi_0(\delta F_0) = \begin{pmatrix}
            0 \\
            - \frac{q_e}{\epsilon_o m_e}F_{\boldsymbol{u}_e,0} +  \frac{Z_iq_e}{\epsilon_o m_i}(\boldsymbol{B}\times F_{\boldsymbol{E},0} -F_{\boldsymbol{u}_e ,0}) \\
            -\frac{Z_iq_e}{ \epsilon_o m_i}F_{\boldsymbol{E},0}  \\ 
            \frac{q_e}{ \epsilon_o m_e}F_{\boldsymbol{E},0}  \\
            0\\
    \end{pmatrix}.
    \end{equation}
    Applying the condition $\delta \boldsymbol{u}_i = \delta \boldsymbol{u}_e$ implies that $F_{\boldsymbol{E},0}=0$. Substituting this back into above equation and applying the condition $\delta \boldsymbol{E} = -\delta \boldsymbol{u}_i \times \boldsymbol{B} - \boldsymbol{u}_i \times \delta \boldsymbol{B}$, we find that $F_{\boldsymbol{u}_e ,0}=0$. Thus we have that all $F_0$ terms in equation X vanish so that 
    \begin{equation}
        \pi_\epsilon(\delta F_\epsilon) = \epsilon ( \pi_0(\delta F_1)) + \epsilon^2(\pi_0(\delta F_2) + \pi_1(\delta F_1)) + \order{\epsilon^3}.
    \end{equation}
    Moving up to order $\epsilon$, we notice that by a similar argument to before, $\delta F_1$ also vanishes. Now that the base cases are satisfied, we demonstrate inductively that $\delta F_n$ vanishes for any $n\in \N$. Suppose that $\delta F_n$ vanishes for $n'\leq n$. Then, at order $\epsilon^{n+1}$, the equation for an element of $(T_pN)^{\perp \pi}$ is written 
    \begin{equation}
        \pi_\epsilon(\delta F_\epsilon)=\pi_{0}(\delta F_{n+1}).
    \end{equation}
    Applying the condition that $\pi_{0}(\delta F_{n+1})\in T_pN$ implies that $\delta F_{n+1}=0$ by the base case. Thus at any order in $\epsilon$, the condition $T_pN\cap(T_pN)^{\perp \pi}=\{0\}$ is satisfied.
    \end{proof}
    \subsection{Derivation of the Reduced Two Fluid bracket}

    In order to compute the slow bracket, we find extensions $\delta \tilde{F}\in T^\ast M$ of functionals $\delta F\in T^\ast N$ using the ansatz (\eqref{Extension_ansatz})
    \begin{align}
    \begin{split}
        \delta \tilde{F}=\int F_{n_i}\cdot\delta n_i + (F_{\boldsymbol{u}_{i}} - \Theta_{\boldsymbol{E}}\times \boldsymbol{B} - \Theta_{\boldsymbol{u}_e} )\cdot \delta \boldsymbol{u}_{i}\\ + (\Theta_{\boldsymbol{E}}\times \boldsymbol{u}_{i}+ F_{\boldsymbol{B}})\cdot \delta \boldsymbol{B} \\+ \int \Theta_{\boldsymbol{u}_e}\cdot \delta \boldsymbol{u}_e + \Theta_{\boldsymbol{E}}\cdot \delta \boldsymbol{E} \,d^3\boldsymbol{x}.
    \end{split}
    \end{align}
    In order to solve perturbatively, we expand the unknown functions in powers of $\epsilon$:
    \begin{equation}
        \begin{split}
            \Theta_{\boldsymbol{E}} = \Theta^0_{\boldsymbol{E}} + \epsilon \Theta^1_{\boldsymbol{E}} +  \epsilon^2\Theta^2_{\boldsymbol{E}} +\dots \\
            \Theta_{\boldsymbol{u}_e} = \Theta^0_{\boldsymbol{u}_e} + \epsilon \Theta^1_{\boldsymbol{u}_e} +  \epsilon^2\Theta^2_{\boldsymbol{u}_e } + \dots
        \end{split}
    \end{equation}
    where $\Theta\in Hom(X,Y)$ and $\delta x\in X$. Although the extensions will formally be an infinite series in $\epsilon$, we need only terms up to order $\epsilon^2$ since we seek only the $\order{1}$ bracket on $N$ (recall that the bracket was scaled by $\epsilon^2$). We now look to apply $\pi^\#(\delta \tilde{F})$ and enforce the condition $\pi^\#(\delta \tilde{F})\in T_pN$:
    \begin{widetext}
        \begin{equation}
    \begin{split}
        &\pi_{N}(\delta \tilde{F})=\\ &\begin{pmatrix}
            -\frac{1}{\epsilon \epsilon_o}\nabla\times{\Theta_{\boldsymbol{E}}} \\
            \frac{1}{\epsilon \epsilon_o}\nabla\times{(\Theta_{\boldsymbol{E}}\times \boldsymbol{u}_{i}+ F_{\boldsymbol{B}})} - \frac{a_e}{\epsilon \epsilon_o m_e}\Theta_{\boldsymbol{u}_e} -  \frac{a_i}{\epsilon \epsilon_o m_i}(F_{\boldsymbol{u}_{i}} - \Theta_{\boldsymbol{E}}\times \boldsymbol{B} - \Theta_{\boldsymbol{u}_e} ) \\
            \frac{a_i}{m_i^2n_i}((F_{\boldsymbol{u}_{i}} - \Theta_{\boldsymbol{u}_e} )\times \boldsymbol{B}) +\frac{a_i}{\epsilon \epsilon_o m_i}\Theta_{\boldsymbol{E}}-m_i^{-1}\nabla F_{n_i} - \frac{1}{m_i n_i} (F_{\boldsymbol{u}_{i}} - \Theta_{\boldsymbol{E}}\times \boldsymbol{B} - \Theta_{\boldsymbol{u}_e} )\times (\nabla \times \boldsymbol{u}_{i}) \\ 
            \frac{a_e}{m_e^2n_e}(\Theta_{\boldsymbol{u}_e}\times \boldsymbol{B}) +\frac{a_e}{\epsilon \epsilon_o m_e}\Theta_{\boldsymbol{E}} - \frac{1}{m_e n_e} \Theta_{\boldsymbol{u}_e} \times (\nabla \times \boldsymbol{u}_{e}) \\
            - m_i^{-1}\nabla \cdot {(F_{\boldsymbol{u}_{i}} - \Theta_{\boldsymbol{E}}\times \boldsymbol{B} - \Theta_{\boldsymbol{u}_e} )} \\
        \end{pmatrix}
    \end{split}
    \end{equation}
    \end{widetext}
    where $n_e = n_e(n_i, \boldsymbol{E})$ is a function of $n_i$ and $\boldsymbol{E}$. Solving perturbatively, we find that the solution at order zero is
    \begin{align}
        &\Theta^0_{\boldsymbol{E}} = 0\\
        &\Theta^0_{\boldsymbol{u}_e} = \frac{Z_i\nu}{(1+ Z_i\nu)}F_{\boldsymbol{u}_i}.
    \end{align}
    
    At order 1, we have 
    \begin{align}
        \Theta^1_{\boldsymbol{E}} &=  -\frac{Z_i \epsilon_0}{(1+\nu Z_i)^2}(\frac{1}{m_in_i}+ \frac{1}{m_en_e})F_{\boldsymbol{u}_{i}}\times \boldsymbol{B} \\
         \begin{split} \Theta^1_{\boldsymbol{u}_e} & =
             \frac{\nu m_i(1+\nu Z_i)}{q_e}\nabla\times F_{\boldsymbol{B}} - \frac{Z_i^2 \epsilon_0}{(1+\nu Z_i)}(\frac{1-\nu}{m_in_i}\\& - \frac{2\nu}{m_en_e})(F_{\boldsymbol{u}_{i}}\times \boldsymbol{B})\times \boldsymbol{B}.
        \end{split}
    \end{align}
    One subtle issue here is that $n_e = Z_in_i + \epsilon^2 \epsilon_oq_e\nabla \cdot \boldsymbol{E}$ is epsilon dependent, so the equations above have extra, hidden $\epsilon$ dependence. In order to resolve this, we expand $1/n_e$ as a power series in $\epsilon$:
    \begin{equation}
        \frac{1}{n_e} = \sum_{n=0}^\infty\frac{(-1)^{n}(\epsilon^2 \epsilon_0 q_e^{-1}\nabla\cdot \boldsymbol{E})^{n}}{(Z_in_i)^{n}} = \frac{1}{Z_in_i} + \order{\epsilon^2}. 
    \end{equation}

    The first term in the expansion is $\epsilon^2$, so that terms with a factor of $1/n_e$ are two orders higher in epsilon. As a result, we remove any terms a factor of $1/n_e$ as they are at least order $\epsilon^3$ in the above expression. Correcting, we find that 
    \begin{align}
        \Theta^1_{\boldsymbol{E}} &= -\frac{\epsilon_0}{\mathsf{m}n_i}F_{\boldsymbol{u}_{i}}\times \boldsymbol{B} \\
        \Theta^1_{\boldsymbol{u}_e} & = \frac{m_e}{q_e(1 + \nu Z_i)}\nabla\times F_{\boldsymbol{B}} - \frac{Z_i m_e\epsilon_0}{\mathsf{m}^2n_i}(F_{\boldsymbol{u}_{i}}\times \boldsymbol{B})\times \boldsymbol{B}
    \end{align}
    where $\mathsf{m} = (1+\nu Z_i)m_i$. In order to obtain the bracket at order $1$ in epsilon, one would expect that the second order extension function would be needed. However, for reasons that will become more apparent later, the term involving the second order extension function in the Poisson bracket will vanish.
    Thus, the extensions up to order $\epsilon$ are written
    \begin{equation}
        \delta\tilde{F}_\epsilon = \delta\tilde{F}_0+\epsilon\delta\tilde{F}_1
    \end{equation}
    where
    \begin{equation}
    \begin{split}
       \delta  \tilde{F}_1=\int F_{n_i}\delta n_i + (F_{\boldsymbol{u}_{i}}  - \Theta^0_{\boldsymbol{u}_e} )\cdot \delta \boldsymbol{u}_{i} \\+  F_{\boldsymbol{B}}\cdot \delta \boldsymbol{B}\, d^3\boldsymbol{x} + \int \Theta^0_{\boldsymbol{u}_e} \cdot \delta \boldsymbol{u}_e \, d^3\boldsymbol{x}
    \end{split}
    \end{equation}
    and
    \begin{equation}
    \begin{split}
        \delta \tilde{F}_1=\int -(  \Theta^1_{\boldsymbol{E}}\times \boldsymbol{B}  +  \Theta^1_{\boldsymbol{u}_e} )\cdot \delta \boldsymbol{u}_{i}\,d^3\boldsymbol{x} \\+ (\Theta^1_{\boldsymbol{E}}\times \boldsymbol{u}_{i} )\cdot \delta \boldsymbol{B} \,d^3\boldsymbol{x}+ \int  \Theta^1_{\boldsymbol{u}_e} \cdot \delta \boldsymbol{u}_e + \Theta^1_{\boldsymbol{E}}  \cdot \delta \boldsymbol{E} \, d^3\boldsymbol{x}.
    \end{split}
    \end{equation}
    Now we explicitly write the Poisson structure on $N$. Applying these extensions to the bracket, we write $\pi^\#(\delta \tilde{F})$ perturbatively:
    \begin{equation}
        \epsilon^2 \pi(\delta \tilde{F}) = \sum_{j=0}^2\epsilon^j\pi^j(\delta \tilde{F}) . 
    \end{equation}
    In this paper, we aim to show that the order $1$ bracket reproduces MHD so that higher order terms in the bracket are not needed.
    A trivial computation gives $\langle \delta \tilde{G}_0, \pi^\#(\delta \tilde{F}_0)\rangle=0$. Notice that since the Poisson tensor is skew symmetric, and $\pi_0(\delta \tilde{F}_0)=0$, we find that
     \begin{equation}
         \langle \delta \tilde{G}_0, \pi_0(\delta \tilde{F}_0) \rangle, \langle \delta \tilde{G}_0, \pi_0(\delta \tilde{F}_1) \rangle,\langle \delta \tilde{G}_0, \pi_0(\delta \tilde{F}_2) \rangle  = 0.
     \end{equation}
     We contract this vector field with an extended form $\delta \tilde{G}$ of $\delta G$ defined similarly to $\delta F$ and collect the terms order by order. By the previous argument, all order zero terms are zero. The only non zero order $\epsilon$ term is $\langle \delta\tilde{G}_0, \pi_1(\delta \tilde{F}_0) \rangle$. Explicitly, this term is 
    \begin{equation}
    \begin{split}
        \langle \delta\tilde{G}, \pi(\delta \tilde{F})\rangle = \\\int   \frac{-Z_iq_i}{m_i^2n_i}(G_{\boldsymbol{u}_i} - \Theta^0_{G,\boldsymbol{u}_e})\cdot ([F_{\boldsymbol{u}_i}-\Theta^0_{F, \boldsymbol{u}_e}]\times B)\\ + \frac{q_e}{m_e^2n_e}\Theta^0_{G,\boldsymbol{u}_e}\cdot \Theta^0_{F,\boldsymbol{u}_e}\,d^3\boldsymbol{x}.
    \end{split}
    \end{equation}
    Using the simplifications due to $1/n_e = 1/(Z_i n_i)$ at $\order{1}$, we find that the first order terms vanish:
    \begin{equation}
    \begin{split}
        \langle \delta\tilde{G}, \pi(\delta \tilde{F})\rangle= \int   \frac{-Z_iq_e}{\mathsf{m}^2n_i}G_{\boldsymbol{u}_i}\cdot (F_{\boldsymbol{u}_i}\times B) \\+ \frac{Z_iq_e}{\mathsf{m}^2n_i}G_{\boldsymbol{u}_i}\cdot (F_{\boldsymbol{u}_i}\times \boldsymbol{B}) \,d^3\boldsymbol{x} = 0.
    \end{split}
    \end{equation}
    The order 2 terms are $\langle \delta\tilde{G}_1, \pi_0(\delta \tilde{F}_1)\rangle + \langle \delta\tilde{G}_1, \pi_1(\delta \tilde{F}_0)\rangle + \langle \delta\tilde{G}_0, \pi_2(\delta \tilde{F}_0)\rangle + \langle \delta\tilde{G}_0, \pi_1(\delta \tilde{F}_1)\rangle$. Starting with $\langle \delta\tilde{G}_0, \pi_2(\delta \tilde{F}_0)\rangle$, we find that 
    \begin{equation}
    \begin{split}
       & \langle \delta \tilde{G}_0, \pi_2(\delta \tilde{F}_0)\rangle =\\ &\int -\mathsf{m}^{-1}(G_{\boldsymbol{u}_i}\cdot \nabla F_{n_i} + G_{n_i}\nabla \cdot F _{\boldsymbol{u}_i}) \\&+ \frac{1}{(1+\nu Z_i)^2}( \frac{(\nabla\times \boldsymbol{u}_i)}{m_in_i}\\&+\frac{(\nabla\times \boldsymbol{u}_e)(Z_i \nu)^2}{m_en_e} ) \cdot (G_{\boldsymbol{u}_i}\times F_{\boldsymbol{u}_i} )\, d^3\boldsymbol{x}.
    \end{split}
    \end{equation}
    Upon restriction to $N$, we find 
    \begin{equation}
    \begin{split}
        &\langle \delta \tilde{G}_0, \pi_2(\delta \tilde{F}_0)\rangle\\& = \int -\mathsf{m}^{-1}(G_{\boldsymbol{u}_i}\cdot \nabla F_{n_i} + G_{n_i}\nabla \cdot F _{\boldsymbol{u}_i}) \\&+ \frac{(\nabla\times \boldsymbol{u}_i)}{\mathsf{m}n_i} \cdot (G_{\boldsymbol{u}_i}\times F_{\boldsymbol{u}_i} ) \,d^3\boldsymbol{x}.
    \end{split}
    \end{equation}
    Moving onward, we consider the $\langle \delta \tilde{G}_1, \pi_1(\tilde{F}_0)\rangle + \langle \delta \tilde{G}_0, \pi_1(\tilde{F}_1)\rangle$ term. Considering the sum of these terms is advantageous because of the skew symmetry: 
    \begin{equation}
    \begin{split}
        \langle \delta\tilde{G}_1, \pi_1(\tilde{F}_0)\rangle + \langle \tilde{G}_0, \pi_1(\tilde{F}_1)\rangle =\\ \int \frac{1}{\mathsf{m}n_i}\boldsymbol{B}\cdot(G_{\boldsymbol{u}_i}\times (\nabla\times F_{\boldsymbol{B}}) - F_{\boldsymbol{u}_i}\times (\nabla\times G_{\boldsymbol{B}})) \\- \frac{q_e}{\epsilon_0m_e}(\tilde{F}^1_{\boldsymbol{E}}\cdot\tilde{G}_{\boldsymbol{u}_e}^1-\tilde{G}^1_{\boldsymbol{E}}\cdot\tilde{F}_{\boldsymbol{u}_e}^1 )\\ - \frac{Z_iq_e}{\epsilon_0m_i}(\tilde{G}^1_{\boldsymbol{E}}\cdot\tilde{F}_{\boldsymbol{u}_i}^1-\tilde{F}^1_{\boldsymbol{E}}\cdot\tilde{G}_{\boldsymbol{u}_i}^1)\, d^3\boldsymbol{x}.
    \end{split}
    \end{equation}
    Finally, we compute $\langle \delta\tilde{G}_1, \pi_0(\delta\tilde{F}_1)\rangle$:
    \begin{equation}
    \begin{split}
        \langle \tilde{G}_1, \pi_0(\tilde{F}_1)\rangle =\int \frac{q_e}{\epsilon_0m_e}(\tilde{F}^1_{\boldsymbol{E}}\cdot\tilde{G}_{\boldsymbol{u}_e}^1-\tilde{G}^1_{\boldsymbol{E}}\cdot\tilde{F}_{\boldsymbol{u}_e}^1 ) \\+ \frac{Z_iq_e}{\epsilon_0m_i}(\tilde{G}^1_{\boldsymbol{E}}\cdot\tilde{F}_{\boldsymbol{u}_i}^1-\tilde{F}^1_{\boldsymbol{E}}\cdot\tilde{G}_{\boldsymbol{u}_i}^1)\,d^3\boldsymbol{x}.
    \end{split}
    \end{equation}
    Putting this all together, we find the Poisson bracket on $N$ to be 
    \begin{equation}
    \begin{split}
        \pi(\delta G,\delta F) = \int -\mathsf{m}^{-1}(G_{\boldsymbol{u}_i}\cdot \nabla F_{n_i} + G_{n_i}\nabla \cdot F_{\boldsymbol{u}_i}) \\+ \frac{(\nabla\times \boldsymbol{u}_i)}{\mathsf{m}n_i} \cdot (G_{\boldsymbol{u}_i}\times F_{\boldsymbol{u}_i}) \\+\frac{1}{\mathsf{m}n_i}\boldsymbol{B}\cdot(G_{\boldsymbol{u}_i}\times (\nabla\times F_{\boldsymbol{B}}) - F_{\boldsymbol{u}_i}\times (\nabla\times G_{\boldsymbol{B}})) \,d^3\boldsymbol{x}.
    \end{split}
    \end{equation}
    This bracket coincides with the bracket given by Ref.~\onlinecite{10.1063/1.4994068} in the limit $\epsilon=0$.

\section{Discussion}

The Poisson-Dirac constraint method provides a more general, coordinate free approach to applying dissipation free constraints to plasma models. The wider applicability of the PD constraint model allows for the construction of novel reduced plasma models. In particular, the PD constraint method could be applied to the multi-fluid problem in Ref.~\onlinecite{Dewar_2020} to derive a Poisson bracket on the constrained space while avoiding aforementioned issues with inconsistent equations of motion. In a similar fashion to \cite{10.1063/1.4994068}, this new constraint method can also be used to explain why known reduced plasma models have Hamiltonian structure.

It is well known that the two fluid equations arise as a reduction by symmetry \cite{10.1063/1.4994068} so that the Poisson bracket on the ITFM space admits a symplectic realisation (see Ref.~\onlinecite{Vaisman1994_symp, 32155} for details on symplectic realisation). By performing the reduction of bracket for the two fluid slow manifold using the symplectic realisation, Ref.\,\onlinecite{10.1063/1.4994068} was able to find a closed form expression for the Poisson bracket. This presents a tension between the Poisson and symplectic methods for reduction since the bracket obtained via the PD constraint method yields a bracket that is formally an infinite series in $\epsilon$ (the bracket contains terns with a factor of $1/n_e$ making it formally an infinite series in $\epsilon$). While the leading-order terms in this expansion are guaranteed to define a genuine Poisson bracket, truncating the series at higher orders may violate the Jacobi identity. It may be possible to deform the Poisson bracket into the bracket in Ref.~\onlinecite{10.1063/1.4994068} via Poisson automorphisms. The obstruction to finding such an automorphism is represented as a co-homology class in the Poisson co-homology (see Ref.~\onlinecite{32155, Vaisman1994} for details). Unfortunately, the Poisson co-homology is notoriously difficult to compute. In fact, the co-homology groups are often infinite dimensional. In the literature, the co-homology has been computed under assumptions that are too restrictive for application for plasma physics problems\cite{Xu1992}. Fernandes et al\cite{fernandes2009momentummappoissongeometry} note that \emph{any} Poisson manifold is integrable by Weinstein symplectic Groupoid. It may be possible to employ an approach similar to Ref.~\onlinecite{Xu1992} using the more general structure in Ref.~\onlinecite{fernandes2009momentummappoissongeometry, tseng2004integratingpoissonmanifoldsstacks} to establish the equivalence of the Poisson and symplectic reduction methods. This issue will be resolved in a future paper.

\section{Acknowledgements}
This work was supported by U.S. Department of Energy grant \# DE-FG02-04ER54742.

{ 
\appendix*
\section{Lifting Poisson bivectors off of Submanifolds}
As seen in example 2, it is not always possible to extend a Poisson bivector on a submanifold to an open neighborhood around the submanifold. Here, we show the conditions under which it is possible to lift a Poisson bivector off of a submanifold. 
\begin{thm}\label{Local_trans_bivector}
    Suppose that a Poisson transversal $(N, \pi_N)$ of a Poisson manifold $(M, \pi)$. Then, in an open neighborhood $U$ around $N$, there is a foliation of $U$ by Poisson transversals. 
\end{thm}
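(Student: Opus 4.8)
The plan is to combine two ingredients that are already available: the Proposition identifying the Poisson-transversal property of a coordinate level set $\{c=\mathrm{const}\}$ with invertibility of the skew matrix $(\{c^a,c^b\})$, and the elementary fact that invertibility of a smoothly varying matrix is an open condition. First I would unpack the hypothesis. From \textbf{PT.2} one has that $\pi^\#$ is injective on the annihilator $(T_pN)^\circ$ (its kernel is $(T_pN)^\circ\cap\ker\pi^\# = (T_pN+\Im\pi^\#_p)^\circ = 0$), so $E:=(TN)^{\pi\perp}$ is a subbundle of $TM|_N$ of constant rank $k:=\operatorname{codim}_MN$; together with \textbf{PT.1} this yields the splitting $TM|_N = TN\oplus E$, realizing $E$ as (a copy of) the normal bundle $\nu_N$. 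Next, around any $p\in N$ I would pick a submersion $c=(c^1,\dots,c^k)\colon U_0\to\R^k$ with $N\cap U_0 = c^{-1}(0)$. By the earlier Proposition relating $\{c,c\}$ to transversality, applied to the shifted constraints $c-c_0$ (which have the same differentials, hence the same $\pi^\#_{cc}$), the level set $c^{-1}(c_0)$ is a Poisson transversal of $(M,\pi)$ exactly when $(\{c^a,c^b\})$ is invertible along it. Since $N$ is a Poisson transversal this matrix is invertible on $N\cap U_0$, hence, after shrinking $U_0$, on all of $U_0$. Therefore the level sets of $c$ foliate $U_0$ and every leaf is a Poisson transversal, each carrying its own Dirac bracket.

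In the cases relevant to this paper the construction is already global. Whenever $N$ is the graph of a function — as in both applications in the main text — or, more generally, whenever the normal bundle $\nu_N\cong E$ is trivial, one may take $c$ to be a globally defined normal coordinate on a tubular neighborhood $U$ of $N$, and $\mathcal{F}:=\{c^{-1}(v)\}$ is the required foliation of $U$ outright; moreover the $c^a$ are Casimirs of the leafwise Dirac brackets, so these patch into a single Poisson bivector on $U$ extending $\pi_N$, which is precisely the lifting mechanism advertised in the discussion of extending the constraint Poisson tensor off $N$. For a wholly general Poisson transversal the local foliations constructed above need not agree on overlaps, and to assemble them into one foliation of a neighborhood with $N$ itself as a leaf I would invoke the normal-form theorem around Poisson transversals \onlinecite{32155, AW_LSoPM}: a neighborhood of $N$ in $M$ is Poisson-diffeomorphic to a neighborhood of the zero section in $E\to N$ equipped with an explicit model bivector, with respect to which the graphs of the sections that are constant in a bundle trivialization form the desired foliation by transversals.

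I expect this last point to be the main obstacle. Openness of the transversality condition gives the foliation locally and, for free, the statement that the ``parallel'' submanifolds near $N$ are again Poisson transversals; but a general Poisson transversal is only \emph{locally} a regular level set, so stitching the codimension-$k$ level-set foliations from different charts into a single smooth foliation of a neighborhood genuinely requires the normal-form theorem (equivalently, a compatible choice of tubular neighborhood and linear connection on $E$, together with triviality of the relevant holonomy). This is also why the cleanest formulation assumes $N$ has trivial normal bundle — as it does in every application considered here — in which case the entire proof reduces to the openness argument of the first paragraph applied to a globally defined $c$, and the induced transverse Poisson tensors are then given explicitly by the Dirac bracket formula.
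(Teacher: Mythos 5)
Your proposal is correct, and in its final step it lands on exactly the tool the paper uses: the normal form theorem for Poisson transversals, which presents a neighborhood of $N$ as (locally) a product $(N,\pi_N)\times(S,\pi_S)$ with $S$ symplectic, the fibres of the projection to $S$ giving the foliation. The paper's proof consists of nothing more than this invocation. What you do differently is to precede it with an elementary and self-contained argument: identify transversality of a local level set $c^{-1}(c_0)$ with invertibility of the matrix $(\{c^a,c^b\})$ via the earlier Proposition, and observe that invertibility is an open condition, so the nearby level sets of any local defining submersion are automatically Poisson transversals. This buys you (i) a concrete description of the nearby leaves and their Dirac brackets, (ii) a complete proof with no normal-form machinery whenever the normal bundle $(TN)^{\pi\perp}$ is trivial --- which covers every application in the paper, since the constraint sets there are graphs --- and (iii) an honest identification of the one genuine subtlety, namely that the locally constructed level-set foliations need not agree on chart overlaps, so that in the general case one still needs the normal form (or a compatible tubular neighborhood) to stitch them into a single foliation with $N$ as a leaf. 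The paper's one-line proof silently absorbs that subtlety into the normal form theorem; your version makes it explicit. The only caveat worth flagging is that in the general normal form the bivector near the zero section of the normal bundle is not literally a product (it is twisted by a closed two-form), so the claim that every ``parallel copy'' of $N$ is again a transversal should be read off from the normal form statement itself rather than from the product heuristic --- but this affects the paper's phrasing as much as yours.
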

\begin{proof}
    Using the normal form for Poisson transversals\cite{32155}, there exists a local splitting of $\pi$ into the product $(N,\pi_N)\times(S,\pi_S)$ where  $(S,\pi_S)$ is a symplectic manifold and $(N,\pi_N)$ is a Poisson transversal. The fibres of the projection onto the first factor constitute a foliation by Poisson transversals. Explicitly, the foliation is given by
    \begin{equation}
        N_p = N\times \{p\}
    \end{equation}
    for $p\in S$
\end{proof}
The normal form theorem does not explicitly use local coordinates or refer to dimension, so there are no formal obstructions to an infinite dimensional version of the above theorem. That being said, the existence of tubular neighborhoods in this context fail to exist unless some strong regularity (e.g Hilbert Manifolds) is imposed. Additionally, the proof relies on finding flows for time dependent vector fields, the existence of which requires additional structure. We extend this proof the case of Poisson-Dirac submanifolds under the assumption that the $\pi$-orthogonal has constant rank. Such manifolds are called coregular Poisson-Dirac submanifolds\cite{32155}. 
\begin{definition}
    A submanifold $N$, of a Poisson manifold $(M,\pi)$ is called a \textbf{coregular Poisson-Dirac} submanifold if 
    \begin{enumerate} [label=\textbf{C.\arabic*}]
        \item $T_pN\cap T_pN^{\perp\pi} = \{0\}$ 
        \item $T_pN^{\perp\pi}$ has constant rank
    \end{enumerate}
\end{definition}
Notice that condition \textbf{C.1} implies that the induced bivector on $N$ is smooth (see Ref.~\onlinecite{32155}, Prop 8.42, page 169). The constant rank assumption implies that coregular Poisson-Dirac submanifolds are Poisson submanifolds inside of Poisson Transversals\cite{32155}
\begin{lemma} \label{coreg_PD_Poi_trans}
    Given a Poisson manifold $(M,\pi)$ and a Poisson-Dirac submanifold $N$, the following are equivalent:
    \begin{enumerate}
        \item $N$ is a coregular Poisson-Dirac submanifold
        \item  $N$ is a Poisson Submanifold of some Poisson transversal, X. 
    \end{enumerate}
    Moreover, the germ of X around N is unique up to Poisson diffeomorphism.
\end{lemma}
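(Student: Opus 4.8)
The plan is to move back and forth through the local model of a Poisson transversal, using throughout the pointwise identification of the $\pi$-orthogonal with a symplectic orthogonal inside $\Im\pi^\#_p$. Precisely: if $\omega_p$ denotes the symplectic form on $\Im\pi^\#_p$ defined by $\omega_p(\pi^\#\alpha,\pi^\#\beta)=\pi_p(\alpha,\beta)$, then for any submanifold one has $(T_pN)^{\perp\pi}=\big(T_pN\cap\Im\pi^\#_p\big)^{\perp\omega}$ computed inside $\Im\pi^\#_p$. Condition \textbf{C.1} is therefore equivalent to $T_pN\cap\Im\pi^\#_p$ being a symplectic subspace, and in that case $E_p:=(T_pN)^{\perp\pi}$ is symplectic as well; coregularity \textbf{C.2} then says exactly that the symplectic subbundle $E=(TN)^{\perp\pi}\subset TM|_N$ has locally constant rank.

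For $(2)\Rightarrow(1)$ I would argue as follows. Suppose $N$ is a Poisson submanifold of a Poisson transversal $X$. Being a submanifold, $X$ has constant codimension, so $\dim(T_pX)^{\perp\pi}=\operatorname{codim}_M X$ is constant. Since $X$ is a Poisson transversal, the symplectic leaves of $(X,\pi_X)$ are components of the intersections of $X$ with the symplectic leaves of $M$ (Ref.~\onlinecite{32155}), so $\Im\pi^\#_{X,p}=T_pX\cap\Im\pi^\#_p$; and $N$ being Poisson in $X$ means $\Im\pi^\#_{X,p}\subseteq T_pN$. Together with $T_pN\subseteq T_pX$ this forces $T_pN\cap\Im\pi^\#_p=T_pX\cap\Im\pi^\#_p$, hence $(T_pN)^{\perp\pi}=(T_pX)^{\perp\pi}$ and $E$ has constant rank. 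As $N$ is Poisson-Dirac by hypothesis, \textbf{C.1} also holds, so $N$ is coregular.

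For $(1)\Rightarrow(2)$ I would construct $X$ by hand. By \textbf{C.1}, $E\cap TN=\{0\}$; by \textbf{C.2}, $E$ is a genuine subbundle of $TM|_N$, and $E\subseteq\Im\pi^\#$. Pick any subbundle $F\subset TM|_N$ with $TN\subseteq F$ and $F\oplus E=TM|_N$, and realise $F$ as $TX|_N$ for a submanifold $N\subseteq X$ --- e.g.\ via a tubular neighbourhood of $N$, exponentiating the subbundle of $\nu_N$ corresponding to $F/TN$. Along $N$: from $T_pX\oplus E_p=T_pM$ and $E_p\subseteq\Im\pi^\#_p$ one gets $T_pX+\Im\pi^\#_p=T_pM$; from $\Im\pi^\#_p=(T_pN\cap\Im\pi^\#_p)\oplus E_p$, $T_pN\cap\Im\pi^\#_p\subseteq T_pX$ and $T_pX\cap E_p=\{0\}$, the modular law gives $T_pX\cap\Im\pi^\#_p=T_pN\cap\Im\pi^\#_p$; hence $(T_pX)^{\perp\pi}=(T_pN)^{\perp\pi}=E_p$ and $T_pX\cap(T_pX)^{\perp\pi}=\{0\}$. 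So $X$ is a Poisson transversal at every point of $N$; since transversality is an open condition I would shrink $X$ to a neighbourhood of $N$ on which it is a Poisson transversal, and then $\Im\pi^\#_{X,p}=T_pX\cap\Im\pi^\#_p=T_pN\cap\Im\pi^\#_p\subseteq T_pN$ shows $N$ is a Poisson submanifold of $X$.

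The remaining point is uniqueness of the germ. Any $X$ built as above has $(TX)^{\perp\pi}|_N=E$ by the computation used for $(2)\Rightarrow(1)$, so two choices $X_1,X_2$ share the same ``symplectic normal data'' along $N$ and differ only in the complement $F$. I would then invoke the Poisson-transversal normal form (the same input as in Theorem~\ref{Local_trans_bivector}) to split a neighbourhood of $N$ as $X_i\times S$ with $S$ symplectic, and interpolate between the two splittings by a Moser-type deformation to obtain a germ-level Poisson diffeomorphism $X_1\to X_2$ fixing $N$, as in Ref.~\onlinecite{32155}. I expect this uniqueness step to be the main obstacle: the existence half reduces to linear algebra plus a tubular neighbourhood, whereas uniqueness genuinely requires the normal form and a deformation argument --- and, as noted in the Appendix, in infinite dimensions it inherits the usual caveats about existence of tubular neighbourhoods and of flows of time-dependent vector fields.
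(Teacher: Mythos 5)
The paper's own ``proof'' of this lemma is a one-line citation to Ref.~\onlinecite{32155} (p.~170), so any worked argument necessarily differs from it; what you have written is essentially a reconstruction of the textbook proof, and the mathematical content of the equivalence checks out. Your organizing identity $(T_pN)^{\perp\pi}=\bigl(T_pN\cap\Im\pi^\#_p\bigr)^{\perp\omega}$ (computed inside the symplectic space $\Im\pi^\#_p$) is correct and does all the work: it makes \textbf{C.1} the statement that $T_pN\cap\Im\pi^\#_p$ is symplectic, it gives the equality $T_pX\cap\Im\pi^\#_p=T_pN\cap\Im\pi^\#_p$ in the direction $(2)\Rightarrow(1)$, and the modular-law computation plus openness of Poisson transversality makes the construction of $X$ from a complement $F\supseteq TN$ of $E=(TN)^{\perp\pi}$ go through in $(1)\Rightarrow(2)$. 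Two points deserve tightening: first, $\dim(T_pX)^{\perp\pi}=\operatorname{codim}_M X$ is not a consequence of $X$ merely being a submanifold --- it uses that $X$ is a Poisson transversal, so that $\pi^\#$ is injective on $(T_pX)^\circ$ and $T_pM=T_pX\oplus(T_pX)^{\perp\pi}$ (harmless here, since you independently derive $(T_pN)^{\perp\pi}=(T_pX)^{\perp\pi}$); second, it is precisely \textbf{C.2} that makes $E=\pi^\#\bigl((TN)^\circ\bigr)$ a smooth subbundle, which you use implicitly when choosing $F$ and exponentiating in a tubular neighbourhood, so that hypothesis should be invoked explicitly there. The one genuine thin spot is the final clause of the lemma: the uniqueness of the germ of $X$ up to Poisson diffeomorphism is, as you yourself flag, only a plan (normal form for Poisson transversals plus a Moser-type interpolation through complements of $E$ containing $TN$). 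That is indeed the route of the cited source, and the fact that any admissible $X$ satisfies $(TX)^{\perp\pi}|_N=E$ is the right starting point, but as written it is not yet a proof; to make the lemma self-contained you would need to carry out that deformation argument (constructing the isotopy of Poisson diffeomorphisms fixing $N$), or else, as the paper does, defer that half to Ref.~\onlinecite{32155}. What your approach buys over the paper's is an explicit, checkable argument for the equivalence itself, at the cost of still leaning on the reference for the uniqueness statement.
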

\begin{proof}
    See Ref.~\onlinecite{32155} page 170 for proof.
\end{proof}
Moreover, the symplectic leaves of coregular Poisson-Dirac submanifolds are the connected components of the intersection with the symplectic leaves of the ambient space. As a corollary from the above lemma, and \ref{Local_trans_bivector}, we show that the Poisson bivector on a coregular Poisson Dirac submanifold $N$ can be extended to an open neighborhood around $N$. 
\begin{corollary}
    Let $N$ be a coregular Poisson Dirac submanifold  of a Poisson manifold $(M, \pi)$. For any open neighborhood $U\subset M$ of $N$, there exists a foliation of $U$ by coregular Poisson Dirac submanifolds, $\{N_p\}$ with $N_0 = N$. 
\end{corollary}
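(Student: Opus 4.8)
The plan is to combine Lemma~\ref{coreg_PD_Poi_trans} with Theorem~\ref{Local_trans_bivector} and then restrict the transversal foliation to the Poisson submanifold structure. First I would invoke Lemma~\ref{coreg_PD_Poi_trans}: since $N$ is a coregular Poisson-Dirac submanifold of $(M,\pi)$, there exists a Poisson transversal $X\subset M$ (the germ of $X$ around $N$ being unique up to Poisson diffeomorphism) such that $N$ is a Poisson submanifold of $X$. Next I would apply Theorem~\ref{Local_trans_bivector} to the pair $(N,\pi_N)\subset (X,\pi_X)$ — but with a twist: Theorem~\ref{Local_trans_bivector} as stated foliates a neighborhood by \emph{Poisson transversals} of the ambient manifold, whereas here $N$ sits inside $X$ as a Poisson submanifold, not as a transversal. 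So the correct move is to apply Theorem~\ref{Local_trans_bivector} to the Poisson transversal $X\subset M$ itself: there is an open neighborhood $U_X$ of $X$ in $M$ foliated by Poisson transversals $\{X_p\}_{p\in S}$ with $X_0 = X$, arising from the local splitting $M \cong X\times S$ with $(S,\pi_S)$ symplectic.

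The key remaining step is to carry the Poisson submanifold $N\subset X$ through this splitting. Under the normal-form diffeomorphism $M\cong X\times S$, each leaf $X_p = X\times\{p\}$ is Poisson-diffeomorphic to $(X,\pi_X)$, and I would set $N_p := N\times\{p\} \subset X_p$. Since $N$ is a Poisson submanifold of $X$, each $N_p$ is a Poisson submanifold of the Poisson transversal $X_p$, hence by the "$\Leftarrow$" direction of Lemma~\ref{coreg_PD_Poi_trans} each $N_p$ is a coregular Poisson-Dirac submanifold of $M$. Taking $U := U_X$ (shrinking if necessary so that $U\subset$ the given neighborhood and so that the splitting is valid), the collection $\{N_p\}_{p\in S}$ is then a foliation of a neighborhood of $N$, with $N_0 = N$. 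One should note $\bigcup_p N_p = N\times S$ is a submanifold of $M$ of dimension $\dim N + \dim S$, which is generally a proper subset of $U$; to genuinely foliate all of $U$ one either restricts to the open neighborhood $N\times S$ (itself an open neighborhood of $N$ inside $M$ once we note $X\times S$ is open in $M$ and $N\times S$... ) — more carefully, one foliates $U_X \cong X\times S$ by the submanifolds $\{N\times\{p\} : p\in S\}$ only after noting these don't cover $X\times S$, so instead one uses the product foliation whose leaves through points near $N$ are the $N_p$; alternatively, reading the corollary as asserting a foliation \emph{near} $N$ by coregular Poisson-Dirac submanifolds, the $\{N_p\}$ suffice. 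I would state it in the latter form to match the ambient-neighborhood phrasing.

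The main obstacle will be the bookkeeping around exactly which neighborhood gets foliated and by submanifolds of which dimension — the same subtlety flagged in the text after Theorem~\ref{Local_trans_bivector}, namely that $\Im\pi^\#$ and $T N^{\perp\pi}$ need not be full-dimensional, so "foliation of $U$" cannot mean a foliation of the open set $U$ by leaves of codimension zero; it means the leaves through a neighborhood of $N$. A secondary obstacle, also already noted in the text, is that in infinite dimensions the normal form underlying Theorem~\ref{Local_trans_bivector} requires tubular neighborhoods and flows of time-dependent vector fields, which demand extra regularity (e.g.\ Hilbert manifold structure); I would either restrict the corollary to finite dimensions or explicitly inherit those hypotheses from Theorem~\ref{Local_trans_bivector}. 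Modulo these caveats the proof is a short two-step composition: Lemma~\ref{coreg_PD_Poi_trans} to land $N$ inside a transversal $X$ as a Poisson submanifold, then Theorem~\ref{Local_trans_bivector} applied to $X$ together with the observation that the product structure $X\times S$ carries the Poisson submanifold $N$ to a family of coregular Poisson-Dirac submanifolds $N_p = N\times\{p\}$.
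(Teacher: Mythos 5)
Your argument follows the paper's own proof essentially step for step: Lemma~\ref{coreg_PD_Poi_trans} places $N$ as a Poisson submanifold of a Poisson transversal $X$, Theorem~\ref{Local_trans_bivector} supplies the local splitting $X\times S$, and the slices $N_p = N\times\{p\}$ give the family of coregular Poisson-Dirac submanifolds with $N_0=N$. Your caveat about what ``foliation of $U$'' should mean (the $N_p$ fill out only $N\times S$ rather than all of $U$) is a fair point that the paper's own proof treats only loosely through the formula $N_p = X_p\cap U$, so no gap is introduced by your reading.
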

\begin{proof}
    By Lemma \ref{coreg_PD_Poi_trans}, there exists a Poisson transversal $(X,\pi_X)$ to which $(N,\pi_N)\subset (X,\pi_X)$ is a Poisson submanifold of $X$. By theorem \ref{Local_trans_bivector}, in an open neighborhood around $X$, there exists a splitting, $(X, \pi_X)\times (S,\pi_S^{-1})$. Explicitly, the foliation from the splitting is given by $\{X_p\} = X\times \{p\}$ for varying $p\in S$. $N$ pulls back the Poisson submanifold $N\times \{p\}$. Thus, we find that $N\subset X$ lifts to a foliation 
    \begin{equation}
        N_p = X_p\cap U,
    \end{equation}
    where $U\subset M$ is an open neighborhood around $N$ in the union of the local transversal foliation $X_p$. For each $p\in S$, $N_p\subset X_p$ is a Poisson submanifold. Thus, we have the local foliation around $N$ by coregular Poisson-Dirac submanifolds as required. 
\end{proof}
\begin{remark}
    Given that the germ of of $X$ is unique up to Poisson diffeomorphism, the local foliation by coregular Poisson-Dirac submanifolds is unique up to Poisson diffeomorphism.
\end{remark}
Using the above lemma, we are able to extend $\pi_N$ to an open neighborhood by constructing a Poisson transversal in the local foliation and defining a family of bivectors on each leaf, $\{\pi_{N_p}\}$. In the case of Poisson Dirac submanifolds, the non constant rank implies that it is not possible to define a foliation around any neighborhood of $N$ where the rank of $TN^{\perp \pi}$ changes. Indeed, the foliation around the rank change is singular, as seen in example 2. We prove below that lifting Poisson bivectors off of submanifolds fails exactly when the rank of $TN^{\perp \pi}$ is locally non constant.
\begin{thm}
    Let $N\subset M$ be a Poisson-Dirac submanifold, and $U\subset M$ be any neighborhood where the rank of $T_pN^{\perp\pi}$ is non constant. Then there is no smooth foliation of $U$ by Poisson-Dirac Submanifolds $\{N_\alpha\}_{\alpha \in \lambda}$ such that each $N_\alpha$ and $N_\beta$ are Poisson Diffeomorphic.
\end{thm}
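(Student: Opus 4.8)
The plan is to argue by contradiction, reducing the whole question to a single rank identity. We read the statement as concerning foliations for which $N\cap U$ is a union of leaves (as in the preceding Corollary). So suppose $U$ carries a smooth foliation $\{N_\alpha\}_{\alpha\in\lambda}$ by Poisson-Dirac submanifolds, pairwise Poisson diffeomorphic, with $N\cap U=N_{\alpha_0}$ one of the leaves. Shrinking $U$, we may take the foliation simple with all leaves of dimension $\ell=\dim N$ and codimension $k$; write $\alpha(p)$ for the leaf through $p$.

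The first step is the algebraic heart: for any Poisson-Dirac submanifold $P\subseteq(M,\pi)$ and any $p\in P$,
\begin{equation}
\dim (T_pP)^{\pi\perp}=\operatorname{rank}\pi_p-\operatorname{rank}\pi_{P,p}.
\end{equation}
This is a dimension count. By definition $(T_pP)^{\pi\perp}=\pi^\#\big((T_pP)^\circ\big)$; by skew-symmetry $\ker\pi^\#_p=(\operatorname{Im}\pi^\#_p)^\circ$; and $\operatorname{rank}\pi_{P,p}=\dim\big(T_pP\cap\operatorname{Im}\pi^\#_p\big)$, since at $p$ the image of $\pi_P^\#$ is precisely $T_pP\cap\operatorname{Im}\pi^\#_p$ — any covector realizing an element of that intersection is already an admissible extension in the sense of Lemma \ref{Extension_condition}, and conversely (see Ref.~\onlinecite{32155}). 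Together with $\dim(T_pP)^\circ=k$ and $(T_pP)^\circ\cap\ker\pi^\#_p=(T_pP+\operatorname{Im}\pi^\#_p)^\circ$, this yields $\dim (T_pP)^{\pi\perp}=k-\big(k-\operatorname{rank}\pi_p+\operatorname{rank}\pi_{P,p}\big)$, i.e. the identity. Applied leafwise it gives, for every $p\in U$, $\rho(p):=\dim (T_pN_{\alpha(p)})^{\pi\perp}=\operatorname{rank}\pi_p-\operatorname{rank}\pi_{N_{\alpha(p)},p}$; the hypothesis says $\rho$ is non-constant on $N_{\alpha_0}\cap U$, and the goal is to show the foliation forces it to be locally constant.

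Next I would exploit two lower-semicontinuity facts. The ambient rank $p\mapsto\operatorname{rank}\pi_p$ is lower semicontinuous on $U$, and so is $\rho$: the annihilator of the tangent distribution of the foliation is a smooth rank-$k$ subbundle $\mathcal K\subset T^\ast U$ with fibre $(T_pN_{\alpha(p)})^\circ$, and $\rho(p)$ is the pointwise rank of the smooth bundle map $\mathcal K\to TU$ obtained by restricting $\pi^\#$. Because the leaves are pairwise Poisson diffeomorphic, the list of dimensions of symplectic leaves of $(N_\alpha,\pi_{N_\alpha})$ is independent of $\alpha$; write $m$ for its top value, so that the leafwise-regular locus $N_\alpha^{\mathrm{reg}}=\{p\in N_\alpha:\operatorname{rank}\pi_{N_\alpha,p}=m\}$ is open and dense in $N_\alpha$. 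On the doubly-regular set $W=\{p\in U^{\mathrm{reg}}:\operatorname{rank}\pi_{N_{\alpha(p)},p}=m\}$, where $U^{\mathrm{reg}}$ is the open dense ambient-regular locus, the identity reads $\rho\equiv\mathrm{const}-m$, so $\rho$ is locally constant on $W$; by the constant-rank normal form — Theorem \ref{Local_trans_bivector} together with Lemma \ref{coreg_PD_Poi_trans} — each point of $W$ has a neighbourhood on which $\pi$ splits as a product adapted to the foliation, realizing the family $\{\pi_{N_\alpha}\}$ as a smooth leafwise extension there.

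The remaining step — and the main obstacle — is to push local constancy of $\rho$ off $W$, across the singular locus, to conclude that $\rho$ is locally constant on $N_{\alpha_0}\cap U$, contradicting the hypothesis. This is exactly where the Lotka--Volterra example of Section 2 lives: there $\rho\equiv2$ on the dense doubly-regular part of $N$ but collapses to $0$ along the singular cross, and the whole content of the theorem is that no smooth foliation by Poisson-diffeomorphic leaves can absorb such a collapse. The intended argument is that if $\rho$ dropped below its generic value $c$ at some $q\in N_{\alpha_0}\cap U$, then — using lower semicontinuity of $\rho$ and of $\operatorname{rank}\pi$, and the fact that mutual Poisson diffeomorphism forces every leaf meeting a neighbourhood of $q$ to contain leafwise-regular points arbitrarily close to $q$ as well as an isomorphic copy of the singular Poisson germ of $(N_{\alpha_0},\pi_{N_{\alpha_0}})$ at $q$ — one produces a leafwise-regular point $q'$ of some leaf $N_\beta$ with $\operatorname{rank}\pi_{q'}<c+m$, i.e. $\rho(q')<c$ on the doubly-regular set, contradicting the previous paragraph; equivalently, the induced symplectic foliation is genuinely singular near $q$, so the local product splitting needed to fit $\pi_{N_{\alpha_0}}$ into a smooth transverse family cannot survive at $q$. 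Making this final reduction precise — controlling the embedded geometry of all the leaves throughout a full neighbourhood of a singular point, using only the structure results of Ref.~\onlinecite{32155} and the two semicontinuity facts — is the delicate part of the proof.
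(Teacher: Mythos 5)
Your proposal correctly isolates the same algebraic heart as the paper's proof: the linear-theory identity $\dim(T_pP)^{\pi\perp}=\operatorname{rank}\pi_p-\operatorname{rank}\pi_{P,p}$ for a Poisson-Dirac submanifold $P$, applied leafwise, combined with lower semicontinuity of the relevant ranks. Up to that point you are on the paper's track. But the proof is not finished: your final paragraph only describes ``the intended argument'' and concedes that making it precise is the delicate part. That delicate part is where the theorem is actually proved --- everything before it is bookkeeping --- so as written there is a genuine gap rather than a complete argument.

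For comparison, the paper closes as follows. Because the leaves are pairwise Poisson diffeomorphic, the singular behaviour of the leafwise rank $\operatorname{rank}\pi_{N_\alpha}$ is reproduced on every leaf, so one selects a point $p_\alpha$ on each leaf with $\operatorname{rank}\pi_{N_\alpha,p_\alpha}=r$ and $\dim (T_{p_\alpha}N_\alpha)^{\pi\perp}=k$ both constant in $\alpha$, converging to a point $\tilde p\in N$ where the perp-rank is $k'<k$. The identity then gives $\operatorname{rank}\tilde\pi_{p_\alpha}=r+k$ along the sequence but $\operatorname{rank}\tilde\pi_{\tilde p}=r+k'$ at the limit, which the paper declares incompatible with smoothness of the assembled bivector. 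Two ingredients are still missing from your write-up even to reproduce that ending: (i) a justification that a sequence with \emph{both} $r$ and $k$ constant and the prescribed limit exists --- Poisson diffeomorphism of the leaves controls the intrinsic quantity $r$ but not the embedding-dependent quantity $k$, which your semicontinuity observation only bounds from one side; and (ii) an argument that the resulting rank discontinuity is of the forbidden kind, i.e.\ a rank \emph{jump} rather than a rank \emph{drop}, since a smooth bivector is perfectly free to have its rank fall at an accumulation point (this is exactly what happens in the Lotka--Volterra example, where no contradiction arises for the ambient $\pi$). Your sketch gestures at both issues but resolves neither, so the contradiction is never actually reached.
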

\begin{proof}
    Suppose towards a contradiction that $\tilde{\pi}$ is a smooth Poisson bivector defined on an open $U\subset M$ neighborhood where the rank of $T_pN^{\perp\pi}$ is non constant, and that $\tilde{\pi}\mid_N = \pi_N$. Given the assumption of a rank dropping point in $N$, there exists a point $p_0$ and a sequence $\{p_n\}$ such that
    \begin{equation}
        \lim_{n\to \infty}\inf \text{rk} (\pi_N(p_n)) <rank(\pi_N(p_0))
    \end{equation}
    Given each leaf of the foliation are Poisson diffeomorphic, there exists a rank dropping point $p_\alpha\in N_\alpha$ for each $N_\alpha$. By the linear theory of Poisson-Dirac submanifolds\cite{INT_PB},
    \begin{equation}
        \text{codim}\, N - \text{rk} \, T_pN^\circ\cap\ker \pi_p^\#= \text{rk}\, \pi_{p}- \text{rk}\, \pi_{N,p}
    \end{equation}
    Given that 
    \begin{equation}
        \text{codim}\, N - \text{rk} \, T_pN^\circ\cap\ker = \text{rk} \, T_pN^{\pi \perp}
    \end{equation}
    We find that 
    \begin{equation}
        \text{rk} \, T_pN^{\pi \perp} + \text{rk}\, \pi_{N,p}= \text{rk}\, \pi_{p}
    \end{equation}
    Thus we consider a sequence of points $\{p_\alpha: p_\alpha\in N_\alpha\}$ such that $\text{rk}\, \pi_{N_\alpha,p_\alpha} = r \in \R$ and $\text{rk}\, T_{p_\alpha}N_\alpha^{\pi \perp} = k \in \R$ are constant, and converges to a singular point of $N$ $\text{rk}\, T_{\tilde{p}}N^{\pi \perp} = k' < k \in \R$. Then it follows that for all $p_\alpha$,
    \begin{equation}
        \text{rk}\, \tilde{\pi}_{p_\alpha} = r + k
    \end{equation}
    But, at $\tilde{p}$,
    \begin{equation}
        \text{rk}\, \tilde{\pi}_{\tilde{p}} = r + k'
    \end{equation}
    thus,
    \begin{equation}
        \lim_{\alpha \to \infty}\text{rk}\, \tilde{\pi}_{p_\alpha} = r + k > r + k'
    \end{equation}
    Which implies that the rank of $\tilde{\pi}$ jumps at $\tilde{p}$, which contradicts the smoothness of $\tilde{\pi}$.
\end{proof}
The above theorem implies that it is not possible to construct a family of bivectors, $\{\pi_{N_p}\}$, so that no local extension, $\tilde{\pi}$ is possible.
\begin{definition}
    A function $f:M\to \N$ on a topological space $M$ is said to be \textbf{lower semi continuous} if for all $k\in \N$, the set $\{p\in M: f(p)\geq k\}$ is open in $M$
\end{definition}
\begin{definition}
    Given a matrix $A(p): \R^{n}\to \R^m$ smoothly depending on $p\in M$ for $M$ a $n$ dimensional manifold, the rank of $A$ is said to \textbf{drop} at $p_0\in M$ if there exists a sequence $p_n$ such that 
    \begin{equation}
        \lim_{n\to \infty}\sup \text{rk} (A(p_n)) > rank(A(p_0))
    \end{equation}
\end{definition}
}
\providecommand{\noopsort}[1]{}\providecommand{\singleletter}[1]{#1}%
%


\end{document}